\renewcommand\footnotetextcopyrightpermission[1]{} 
\definecolor{pblue}{rgb}{0.13,0.13,1}
\definecolor{pgreen}{rgb}{0,0.5,0}
\definecolor{pgrey}{rgb}{0.46,0.45,0.48}
\definecolor{pwhite}{rgb}{1.0,1.0,1.0}
\lstdefinelanguage{myML}{
  morekeywords={match,with,proc,assert,for,foreach,range,do,while,until,break,if,then,else,return,and,or,raise,false,true},
  columns=fullflexible,
  sensitive=true,
  commentstyle = \itshape, 
  morecomment={[l]//},
  mathescape=true,
  basicstyle=\small,
  identifierstyle={\ttfamily},
  stringstyle=\rmfamily,
  keywordstyle=\color{black},
  keywordstyle=\bfseries,
  literate={<-}{$\leftarrow\ $}{2} {->}{$\rightarrow\ $}{2} {:=}{$\leftarrow\ $}{2}
}
\newcommand{\ls}{\lstinline[language=myML,basicstyle=\normalsize]}
\newcommand{\eg}{\emph{e.g.}}
\newcommand{\ie}{\emph{i.e.}}
\newcommand{\Var}{\mathit{Var}}
\newcommand{\Fld}{\mathit{Fld}}
\newcommand{\Cls}{\mathit{Cls}}
\newcommand{\Mtd}{\mathit{Mtd}}
\newcommand{\Expr}{\mathit{Expr}}
\newcommand{\Pos}{\mathit{Pos}}
\newcommand{\Reg}{\mathit{Reg}}
\newcommand{\Loc}{\mathit{Loc}}
\newcommand{\Val}{\mathit{Val}}
\newcommand{\Obj}{\mathit{Obj}}
\newcommand{\classOf}{\mathrm{classOf}}
\newcommand{\Object}{\texttt{Object}\xspace}
\newcommand{\NullType}{\texttt{NullType}\xspace}
\newcommand{\itnull}{\mathit{null}}
\newcommand{\this}{\texttt{this}}
\newcommand{\letin}[3]{\texttt{let} \; #1=#2 \; \texttt{in} \; #3}
\newcommand{\ttifthenelse}[4]{\texttt{if} \; #1=#2 \; \texttt{then} \; #3 \; \texttt{else} \; #4}
\newcommand{\ttifthenelseg}[3]{\texttt{if} \; #1 \; \texttt{then} \; #2 \; \texttt{else} \; #3}
\newcommand{\ttnull}{\texttt{null}}
\newcommand{\new}[2]{\texttt{new}^{#1}\; #2}
\newcommand{\cast}[2]{(#1)\, #2}
\newcommand{\tto}{\texttt{emit}}
\newcommand{\fields}{\mathit{fields}}
\newcommand{\methods}{\mathit{methods}}
\newcommand{\mtable}{\mathit{mtable}}
\newcommand{\cP}{\mathcal{P}}
\newcommand{\cPf}{\cP^{\text{fin}}}
\newcommand{\andrel}{\,{\&}\,}
\newcommand{\dom}{\mathrm{dom}}
\newcommand{\N}{\mathbb{N}}
\newcommand{\fA}{\mathfrak{A}}
\newcommand{\vdashb}{\vdash_{\fA}}
\newcommand{\cM}{\mathcal{M}}
\newcommand{\cU}{\mathcal{U}}
\newcommand{\cV}{\mathcal{V}}
\newcommand{\cA}{\mathcal{A}}
\newcommand{\cB}{\mathcal{B}}
\newcommand{\cT}{\mathcal{T}}
\newcommand{\cS}{\mathcal{S}}
\newcommand{\Ma}{M_{\fA}}
\newcommand{\lo}{{\leq\omega}}
\newcommand{\lfp}{\mathsf{lfp}}
\newcommand{\gfp}{\mathsf{gfp}}
\newcommand{\Node}{\texttt{Node}\xspace}
\newcommand{\ttlast}{\texttt{last}\xspace}
\newcommand{\Test}{\texttt{Test}\xspace}
\newcommand{\cyclic}{\texttt{cyclic}\xspace}
\newcommand{\linear}{\texttt{linear}\xspace}
\newcommand{\typeff}{\ls{typeff}}
\newcommand{\trycatch}[3]{\mathtt{try} \; #1 \; \mathtt{catch}(#2) \; #3}
\newcommand{\throw}[1]{\mathtt{throw} \; #1}
\newcommand{\throws}{\,\mathrel{\mathsf{throws}}\,}
\newcommand{\Regs}{\cP(\Reg)}
\newcommand{\calls}{\,\mathrel{\mathsf{calls}}\,}
\newcommand{\Sig}{\mathit{Sig}}
\newcommand{\Null}{\texttt{Null}}
\newcommand{\Createdat}[1]{\texttt{CreatedAt}\mathopen{}\left( #1 \right)\mathclose{}}
\newcommand{\Union}[2]{\texttt{Union}\mathopen{}\left( #1, #2 \right)\mathclose{}}
\newcommand{\Unknown}{\texttt{Unknown}}
\newcommand{\of}{{:}\,}
\newcommand{\Mid}{\,\mid\,}
\newcommand{\alg}{\cA}
\begin{document}

\sloppy

\theoremstyle{acmdefinition}
\newtheorem{remark}[theorem]{Remark}

\title[Type-based Enforcement of Infinitary~Trace~Properties for Java]{Type-based Enforcement of\\Infinitary~Trace~Properties for Java}

\author{Serdar Erbatur}
\email{serdar.erbatur@utdallas.edu}
\affiliation{%
  \institution{University of Texas at Dallas}
  \streetaddress{800 W. Campbell Road}
  \city{Richardson}
  \country{USA}
  \postcode{75080-3021}
}

\author{Ulrich Schöpp}
\email{schoepp@fortiss.org}
\affiliation{%
  \institution{fortiss GmbH}
  \streetaddress{Guerickestraße 25}
  \city{Munich}
  \country{Germany}
  \postcode{80805}
}

\author{Chuangjie Xu}
\email{xu@fortiss.org}
\affiliation{%
  \institution{fortiss GmbH}
  \streetaddress{Guerickestraße 25}
  \city{Munich}
  \country{Germany}
  \postcode{80805}
}

\begin{abstract}
A common approach to improve software quality is to use programming guidelines
to avoid common kinds of errors. In this paper, we consider the problem of
enforcing guidelines for Featherweight Java (FJ). We formalize guidelines as
sets of finite or infinite execution traces and develop a region-based type and
effect system for FJ that can enforce such guidelines. We build on the work by Erbatur,
Hofmann and Z\u{a}linescu, who presented a type system for verifying the finite
event traces of terminating FJ programs. We refine this type system,
separating region typing from FJ typing, and use ideas of Hofmann and Chen to
extend it to capture also \emph{infinite} traces produced by
\emph{non-terminating} programs. Our type and effect system can express
properties of both finite and infinite traces and can compute information about
the possible infinite traces of FJ programs. Specifically, the
set of infinite traces of a method is constructed as the greatest fixed point of
the operator which calculates the possible traces of method bodies. Our type
inference algorithm is realized by working with the \emph{finitary abstraction}
of the system based on B\"uchi automata.
\end{abstract}

\begin{CCSXML}
<ccs2012>
   <concept>
       <concept_id>10003752.10003790.10002990</concept_id>
       <concept_desc>Theory of computation~Logic and verification</concept_desc>
       <concept_significance>500</concept_significance>
   </concept>
   <concept>
       <concept_id>10002978.10002986.10002990</concept_id>
       <concept_desc>Security and privacy~Logic and verification</concept_desc>
       <concept_significance>500</concept_significance>
   </concept>
</ccs2012>
\end{CCSXML}

\ccsdesc[500]{Theory of computation~Logic and verification}
\ccsdesc[500]{Security and privacy~Logic and verification}

\keywords{program analysis,
          Featherweight Java,
          effect types,
          region types,
          abstract interpretation,
          B\"uchi automata}

\maketitle

\pagestyle{plain} 

\section{Introduction}

To improve the quality of software, it is desirable to support software
development with automatic static analysis tools. While the full verification
of large-scale software is out of scope for fully automated tools, it is
nevertheless useful to use such tools to identify potential bugs. Current
static analysis tools such as Facebook's infer~\cite{infer} can identify a wide
range of possible bugs, such as memory errors, issues with thread safety or resource
management.

In addition to identifying general bugs, it is also useful to identify logical
errors in the software being developed. Many software development processes
enforce a set of programming guidelines that are designed to rule out
particular kinds of errors. Secure programming guidelines, for example, may
stipulate that all inputs from external sources are to be sanitized before
further processing, that all password changes are to be recorded in a log,
\textit{etc}. While such guidelines typically capture fairly simple program
properties, they help programmers to avoid making typical errors. However,
guidelines need to be enforced to be useful. 

In this paper, we consider the automatic static enforcement of programming guidelines 
for Java. At scale, the fully automatic verification of Java code is out of scope 
of current methods. We therefore concentrate on verifying properties of 
event traces of programs. Due to the relative simplicity of typical guidelines,
this should already be useful for their enforcement.

Let us illustrate the idea using an
example. Consider a server that accepts queries for various 
actions on private data from a user and executes them if the user is authorized 
to do so. At the end, it logs that sensitive data was accessed.
\begin{lstlisting}
  void serve() {
    while(hasQuery()) {
      String query = nextQuery();
      boolean authorized = verifyAuthorization(); // emits $authcheck$;
      if (authorized) {
        readSensitiveData(); // emits $access$;
      }
    }
    LogAccess(); // emits $log$;
  }
\end{lstlisting}
Imagine that functions of interest emit events when they are executed. 
We can then think of a guideline as a set of allowed event traces.
In the example, we have indicated the emitted events in comments. 
An example event trace of
\lstinline|serve| would be $authcheck$ $access$ $authcheck$ $log$. It is
emitted in a run where a user first inputs an authorized request, then
an unauthorized one, and then closes the connection. 
Guidelines, such as ``each access to sensitive data occurs after
authorization'' or ``each access to sensitive data is logged'', would then be 
formalized as ``$access$ must be preceded by~$authcheck$'' and
``each $access$ is eventually followed by $log$'' respectively.

The example is chosen to illustrate that we want to verify properties of
traces not just for terminating runs, but also for non-terminating ones. The
example has intended non-terminating behavior, where the
\texttt{while}-loop is repeated indefinitely. The corresponding traces of
such non-terminating runs are infinite
repetitions of $authcheck$ or $authcheck\; access$. This means we want
to verify that the infinite words in $\{authcheck$, $authcheck\; access\}^\omega$
are allowed by the guideline. In addition to such safety properties (``nothing
bad happens''), one would also like to verify liveness properties (``something
good will happen eventually''). An example for a liveness property would be that any
access must be logged, \ie~that $access$ must be followed by $log$. The above
example code would not have this property, since there are infinite traces
where no access is logged.

There are a number of options of using automatic verification techniques to
verify programming guidelines that have been formalized as sets of allowed
traces. The most obvious option would be to use software model checkers. This
approach should certainly be expressive enough to handle most guidelines, but
it may be limited by scalability problems. One would like to balance
expressiveness and scalability. An alternative is to investigate type systems
and abstract interpretation. The idea is to develop type systems that are
strong enough to express and automatically verify programming guidelines.


In this paper, we define a type system for Featherweight Java (FJ)~\cite{DBLP:journals/toplas/IgarashiPW01} that allows the
automatic analysis of traces of both terminating and non-terminating runs. In
previous work, Erbatur et al.~\cite{EHZ:enforcing} have developed a similar
type system, which was however limited to finite runs. For infinite runs,
Hofmann and Chen \cite{HC:abstract,HC:buechi} have introduced an abstract interpretation approach for the
type-based analysis of infinite traces. It is based on Büchi automata, shares
their expressiveness and can therefore be used for both safety and liveness
properties. However, it was defined for a simple procedural language. While
\cite{HC:abstract,HC:buechi} contain a sketch of how to apply their methods in
a type system like~\cite{EHZ:enforcing}, this development remains at the level
of a sketch that needs to be fully developed in further work%
\footnote{Hofmann and Chen~\cite{HC:abstract} do not yet formulate a soundness theorem and the 
sketched notion of well-typedness is not fully precise
(the formulation is unclear w.r.t.~the quantification of~$\eta$).
In addition, some choices in the sketched type system seem unnecessarily limiting: For example, 
any method that calls another method from the same class would be analyzed as
having a non-terminating run even when there is no recursion at all.}.

Here we develop a type system for FJ on the basis of ideas from both
approaches. In contrast to previous work like~\cite{EHZ:enforcing,HC:abstract},
our type system is not a refinement of the FJ type system, but a separate flow
type system in the spirit of Microsoft's TypeScript~\cite{typescript} and
Facebook's Flow~\cite{flow}. The resulting type system is simpler, more
precise, and more flexible than a refinement of FJ's type system such as those in~\cite{EHZ:enforcing,HC:abstract}.

Our development is based on three main ideas: effect types, regions and abstraction.

\textbf{Effect Types.}
Type systems can be extended to statically approximate sets of possible event
traces by means of effect types. The idea is to extend the type with effect
annotations that give information about the possible traces. This approach was
followed in~\cite{EHZ:enforcing}, where method types have the form
\lstinline[keywordstyle=]|void serve()$ \andrel A$|, with an \emph{effect annotation}
``$\&\, A$'' that gives static information about the possible traces of
terminating runs of \texttt{serve}. In essence, $A$ represents a set of traces
that includes all possible traces of terminating runs. In our example,
 $A$ could be $\{authcheck, authcheck\, access\}^*\cdot \{log\}$, 
 that is, the set of traces that are a finite repetition of
$authcheck$ or $authcheck\, access$, followed by $log$.

\textbf{Regions.}
To obtain a useful type system for the analysis of traces in Java, the type
system needs to perform some kind of flow analysis. Consider, for example, the
Java interface \lstinline|Runnable| with a single method 
\lstinline[keywordstyle=]|void run()|. If we try to annotate this method with an effect annotation for the set
of its possible traces, then, without further knowledge, we need to include the
traces of all classes that implement \lstinline|Runnable|. The effect typing of
a call like \lstinline|x.run()| would not be very useful, since it would include the
effects of all implementations of \lstinline|run()|. What is needed is a way to
narrow down the possible objects that~\texttt{x} can point to. This is
why~\cite{EHZ:enforcing} integrates both regions and effects in their type
system. We also use regions, but diverge from~\cite{EHZ:enforcing} in that we
capture them in a separate type system rather than integrating them in the Java
type system. Our type system is simpler, more precise and can accommodate richer
kinds of regions.

\textbf{Abstraction.}
In an implementation of automatic type checking it is of course not possible to
store infinite languages like~$A$ above directly. Some computable abstraction
is needed, a natural choice being finite automata. This choice has been
taken for finite words in~\cite{EHZ:enforcing} and for potentially infinite 
words in~\cite{HC:abstract,HC:buechi}. In particular, Hofmann and 
Chen~\cite{HC:abstract} define an abstraction of Büchi automata for the finite
representation of infinite words. 
Our type system works with the same abstractions.
However, the particular choice of abstraction is not important for it.
We introduce a notion of Büchi abstraction and establish soundness of the 
type system for any instance of it. The automata-theoretic 
constructions of~\cite{EHZ:enforcing,HC:abstract} provide an example.

In summary, the contributions of this paper include:
\begin{itemize}
\item We introduce a new effect type system that is simpler than those of~\cite{EHZ:enforcing,HC:abstract} but allows more precise analysis. It can analyze both terminating and non-terminating programs.
\item We give a simple characterization of the infinitary effect analysis as a single greatest fixed point, which allows us to prove its soundness simply by induction on the approximation of the greatest fixed point.
\item We realize the type inference by working with the finitary abstraction of the type system following~\cite{HC:abstract}. To have a generic formulation, we introduce a notion of B\"uchi abstraction that abstracts from the automata-theoretic constructions of~\cite{HC:abstract}.
\end{itemize}
Moreover, we have a prototype implementation of type inference based on Soot~\cite{Soot}.

\section{Events and Traces}
\label{sec:lang}

Let $\Sigma$ be a finite alphabet. We use $\Sigma^*$ to denote the set of finite words over $\Sigma$, and $\Sigma^\lo$ the set of finite and infinite words. We write $uv$ for the concatenation of the finite word $u$ with the word $v$ which can finite or infinite. We now extend concatenation to and recall some other operations on languages: Let $U,U' \subseteq \Sigma^*$ and $V \subseteq \Sigma^\lo$. We write
\begin{itemize}
\item $U \cdot U' = \{ uu' \mid u \in U, u' \in U' \} \subseteq \Sigma^*$,
\item $U \cdot V = \{ uv \mid u \in U, v \in V \} \subseteq \Sigma^\lo$,
\item $U^* = \{ u_1u_2\ldots u_n \mid u_i \in U \text{ for all } i\in\{1,\ldots,n\} \} \subseteq \Sigma^*$,
\item $U^+=U^*\setminus\{\varepsilon\} \subseteq \Sigma^*$, where $\varepsilon$ is the empty word, and
\item $U^\omega = \{ u_1u_2\ldots u_i \ldots \mid u_i \in U \text{ for all } i \in \N \} \subseteq \Sigma^\lo$.
\end{itemize}
We often write $a^*$ rather than $\{a\}^*$ and similarly $a^+,a^\omega$. Note that $U^*$ is the least fixed point of $\lambda X. \{\varepsilon\} \cup U \cdot X$. However, $U^\omega$ may not be the greatest fixed point of $\lambda X.U\cdot X$, because $U$ may contain the empty word. For instance, $\{\varepsilon\}^\omega = \{\varepsilon\}$ while the greatest fixed point of $\lambda X. \{\varepsilon\} \cdot X$ is $\Sigma^\lo$. In fact, $U^\omega$ is the greatest fixed point of $\lambda X. L \cup (U\setminus\{\varepsilon\}) \cdot X$, where $L= U^*$ if $\varepsilon \in U$ and $L = \emptyset$ otherwise.

For the sake of simplicity, we assume programs have special commands issuing \emph{events} from $\Sigma$. Then a terminating execution of a program will generate a finite \emph{trace}, that is, a word from $\Sigma^*$, whereas a non-terminating execution will generate a finite or infinite trace, that is, a word from $\Sigma^\lo$. Our objective is to capture all possible traces of a given program via a type and effect system, so that we can verify whether they are allowed by a given \emph{guideline}, that is, a language of acceptable traces.

\section{Featherweight Java}
\label{sec:FJ}

We work with a variant of Featherweight Java (FJ) that extends the FJ calculus of~\cite{DBLP:journals/toplas/IgarashiPW01} with filed updates. Following the formulation of~\cite{EHZ:enforcing}, we add primitive if- and let-expressions for convenience and omit constructors for simplicity. Moreover, we add a primitive operation $\tto(a)$ that generates the event $a$ from a finite alphabet~$\Sigma$.

The syntax of the language uses four kinds of names:
\[
\begin{array}{rlcrl}
\text{variables: } & x,y \in \Var,
& \qquad &
\text{classes: } & C,D \in \Cls,
\\
\text{fields: } & f \in \Fld,
& &
\text{methods: } & m \in \Mtd.
\end{array}
\]
Program expressions are defined as follows:
\[
\begin{aligned}
\Expr \ni e ::=  \ x &\Mid \letin{x}{e_1}{e_2} \Mid \ttifthenelse{x}{y}{e_1}{e_2} \\
& \Mid \ttnull \Mid \new{\ell}{C} \Mid \cast{C}{e} \Mid \tto(a) \\
& \Mid x^C.m(\bar{y}) \Mid x^C.f \Mid x^C.f := y
\end{aligned}
\]
The expression $\new{\ell}{C}$ is annotated with a label~$\ell$. We use 
labels only to distinguish different occurrences of $\mathtt{new}$ in a program;
since our type system will track where objects were created.
In a few expressions we have added type annotations and write $x^C$ for a
variable of type~$C$. They will be needed when looking up in the class table (see rules \textsc{get}, \textsc{set} and \textsc{call} in Figure~\ref{fig:typing}).
We sometimes omit annotations when they are not needed.

We assume three distinguished formal elements: $\this \in \Var$, $\Object\in \Cls$ and $\NullType \in \Cls$. The $\NullType$ class plays the role of the type of $\ttnull$ from the Java language specification~\cite[\S4]{JavaSpec}. It may not be used in programs, \ie~we require $C \not= \NullType$ in the expressions $\new{\ell}{C}$ and $\cast{C}{e}$. When $x$ is not a free variable of $e_2$, we may write $e_1;e_2$ rather than $\letin{x}{e_1}{e_2}$.

An FJ program $(\prec,\fields,\methods,\mtable)$ consists of
\begin{itemize}
\item a subtyping relation ${\prec} \in \cPf(\Cls \times \Cls)$ with $C \prec D$ meaning that $C$ is an immediate subclass of $D$,
\item a field list $\fields: \Cls \to \cPf(\Fld)$ mapping a class to its fields,
\item a method list $\methods: \Cls \to \cPf(\Mtd)$ mapping a class to its methods, and
\item a method table $\mtable: \Cls \times \Mtd \rightharpoonup \Var^* \times \Expr$ mapping a method to its formal parameters and its body.
\end{itemize}
All components are required to be \emph{well-formed} (see \eg~\cite[Section~3]{EHZ:enforcing} for details). Let $\preceq$ be the reflexive and transitive closure of $\prec$. We have $C\preceq \Object$ and $\NullType \preceq C$ for all $C\in \Cls$.

\begin{example}
\label{example:code}
As a running example, we consider the following Java code,
in which ${\ell_1}$, ${\ell_2}$ and ${\ell_3}$ are arbitrary fresh labels.
\begin{center}
\vspace{-\abovedisplayskip}
\noindent
\begin{minipage}[t]{.21\textwidth}
\begin{lstlisting}
class Node {
 Node next;
 Node last() {
  emit($a$);
  if (next == null) {
   return this;
  } else {
   return next.last();
  }
 }
}
\end{lstlisting}
\end{minipage}
\hspace{10pt}
\begin{minipage}[t]{.228\textwidth}
\begin{lstlisting}
class Test {
 Node linear() {
  Node x = new$^{\ell_1}$ Node();
  Node y = new$^{\ell_2}$ Node();
  y.next = x;
  return y.last();
 }
 Node cyclic() {
  Node z = new$^{\ell_3}$ Node();
  z.next = z;
  return z.last();
 }
}
\end{lstlisting}
\end{minipage}
\end{center}
In Featherweight Java, the class \Node{} is represented by
\[
\begin{aligned}
\fields(\Node) = \ & \{\texttt{next}\} \\
\methods(\Node) = \ & \{\ttlast\} \\
\mtable(\Node, \ttlast) = \ & ((), e_\ttlast) \\
\end{aligned}
\]
where the expression $e_\ttlast$ is defined as follows:
\[
\begin{aligned}
\hspace*{20pt}
e_\ttlast := \ & \letin{\_}{\tto(a)}{} \\
& \letin{x}{\this^\Node.\texttt{next}}{} \\
& \letin{y}{\ttnull}{} \\
& \texttt{if} \; x = y \; \texttt{then} \; \this \\
& \texttt{else} \; \letin{z}{\this^\Node.\texttt{next}}{z^\Node.\ttlast()} \hspace*{10pt} \qed
\end{aligned}
\]
\end{example}

In the standard FJ type system~\cite{DBLP:journals/toplas/IgarashiPW01}, types are simply classes. In the rest of this paper, we consider only well-typed FJ programs. The type information of an FJ program can be modeled by a class table $(F_0,M_0)$, where $F_0 : \Cls \times \Fld \rightharpoonup \Cls$ is a \emph{field typing} that assigns to each class $C$ and each field $f \in \fields(C)$ the class of~$f$, and $M_0 : \Cls \times \Mtd \rightharpoonup \Cls^* \times \Cls$ a \emph{method typing} that specifies for each class $C$, method $m \in \methods(C)$ and classes of the argument variables the class of the result value of $m$.

The operational semantics of FJ uses a notion of state~$(s,h)$ that consists of a \emph{store} $s : \Var \rightharpoonup \Val$ mapping variables to values and a \emph{heap} $h : \Loc \rightharpoonup \Obj$ mapping locations to objects. The only kinds of values are object locations and $\itnull$. An object $(C,G,\ell)\in \Obj$ contains a class identifier $C \in \Cls$, a valuation $G : \Fld \rightharpoonup \Val$ of its fields and a label $\ell \in \Pos$ indicating where it was created (\ie, the object was created by a $\new{\ell}{C}$ expression).
\[
\begin{aligned}
\text{locations:} && l & \in \Loc \\
\text{values:} && v & \in \Val = \Loc \uplus \{\itnull\} \\
\text{stores:} && s & \in \Var \rightharpoonup \Val \\
\text{heaps:} && h & \in \Loc \rightharpoonup \Obj \\
\text{objects:} && (C,G,\ell) & \in \Obj = \Cls \times (\Fld \rightharpoonup \Val) \times \Pos
\end{aligned}
\]
The operational semantics of terminating evaluations is given as a big-step relation $(s,h) \vdash e \Downarrow v,h' \andrel w$. It expresses that, in the state $(s,h)$, the expression $e$ evaluates to the value $v$ with the heap updated to $h'$, generating the event trace $w \in \Sigma^*$. For diverging evaluations, we work with a coinductively generated relation $(s,h) \vdash e \Uparrow \andrel w$ meaning that the evaluation of~$e$ in the state $(s,h)$ diverges and generates an event trace $w \in \Sigma^\lo$. The operational semantics rules are standard and can be found in Appendix~\ref{sec:operational}.

\section{Region Types and Effects}
\label{sec:type:system:sets}

We introduce a type system for region and effect analysis for FJ. In contrast
to previous work~\cite{EHZ:enforcing,HC:abstract}, we do not refine FJ type
system, but instead build an independent region typing system for FJ. Region
type information is complementary to FJ type information and can be captured
without repeating the FJ type system. This also makes sense practically: we
would not want to modify the Java compiler but develop an additional analysis 
system.

For the sake of understandability, in this section we present the
system with arbitrary languages over~$\Sigma$ as effect annotations. In 
Section~\ref{sec:buechi} we shall make it algorithmic by moving to a finitary
abstraction of languages.

\subsection{Region Types}
\label{sec:region}

Our type system captures region information. A \emph{region} represents
provenance information about a value, such as where in the program an object 
was created or what its actual class is. The intention is that the set of 
values is subdivided in regions according to various criteria.

In this paper, we use the following definition of regions:
\[
\Reg \ni r, s\ ::=\ \Null \Mid \Createdat{\ell} \Mid \Unknown
\]
These regions have the following meaning. The region $\Null$ contains only the value $\itnull$. The region $\Createdat{\ell}$ contains all references to objects that were created by an expression of the form $\new{\ell}{C}$. This region allows us to track where in the program an object originates. The region $\Unknown$ is for references to objects of unknown origin, \eg~from library code. A formal interpretation of regions is given in Section~\ref{sec:soundness}.

It is possible to use a richer definition of regions to capture other properties of interest, such as
taintedness, or to give regions more structure. For example, one may allow
unions or intersections of regions. We plan to do that in further work. Here we use
a simple representative definition of regions in order to focus on the new type
system itself and on infinitary traces.

Regions provide information about the actual classes of objects.
We write $\Cls(r)$ for the set of possible classes of an object in region~$r$,
which is defined as follows:
$\Cls(\Null) = \emptyset$, 
$\Cls(\Createdat{\ell}) = \{C \mid \text{program contains $\new{\ell}{C}$}\}$,
and
$\Cls(\Unknown) = \Cls$.
Note that that $\Cls(r)$ is \emph{not} required to be closed under superclasses.
Indeed, $\Cls(\Createdat{\ell})$ will be just a singleton if the label $\ell$
was used only once.

Regions also provide information about the identity of objects. Objects in
disjoint regions cannot be identical. The region $\Null$ and all regions of
the form $\Createdat{\ell}$ are all pairwise disjoint. However, $\Unknown$ 
overlaps with all other regions. We write $\mathit{disjoint}(r, s)$
to indicate that two regions are disjoint.

\subsection{Region and Effect Expressions}
\label{sec:formal:expression}

Before diving into the details of the type system, we explain the region
and effect expressions that are used in the type system.
The typing judgement will have the following form:
\[
  \Gamma \vdash e : T \calls S.
\]
The context~$\Gamma$ simply maps program variables to regions. 
More interesting are~$T$ and~$S$ on the right-hand side. They both are 
expressions that contain region and effect information about the computation
of~$e$. In this section we explain the meaning of these expressions.

First, $T$ is an expression of the form
$r_1 \andrel U_1 \mid \dots \mid r_n \andrel U_n$.
It lists the possible options for the terminating computations of~$e$.
The option $r_i \andrel U_i$ means that the result value is in region~$r_i$ and
the computation trace is in~$U_i\subseteq \Sigma^*$. Thus, $T$ expresses that whenever~$e$ 
evaluates to a value, there will be some~$i$ such that the result is in 
region~$r_i$ and the computation trace is in~$U_i$.

For example, the following expression
\[
\begin{aligned}
& \mathtt{let}\ x = \ttifthenelseg{\mathit{cond}}{(\tto(a);\, \new{\ell_1}{C})}{(\new{\ell_2}{D})}\\
& \mathtt{in}\ \tto(b);\, x
\end{aligned}
\]
will have type 
$\Createdat{\ell_1} \andrel \{ab\} \mid \Createdat{\ell_2} \andrel \{b\}$.
It either evaluates to a reference pointing to an object that was created 
by a \texttt{new} with label~$\ell_1$ or to one that was created by a \texttt{new} with 
label~$\ell_2$. In the first case, the trace of the computation is $ab$, 
and in the latter case it is just $b$. 
The expression may also be given the less precise type
$\Unknown \andrel \{ab, b\}$, which just expresses that it produces some value
and either has effect~$ab$ or~$b$ while doing so.

Second, $S$ is a \emph{call expression} of the form 
$U_1 \cdot \delta_1 \cup \dots \cup U_k \cdot \delta_k$.
It contains information about the calls that~$e$ may make.
In the expression, each $\delta_i$ is a method signature, which is given
by a class name, a method name and region information. 
We specify signatures precisely below, for a first explanation of~$S$ 
the details are not important.
The expression~$S$ gives us information about which method calls~$e$ can make
and what the traces leading up to such calls can be. It expresses
that for any call that~$e$ makes,
there exists an~$i$ such
the call goes to a method with signature~$\delta_i$
and that the event trace leading to the call is in~$U_i$.

Consider, for example, the following FJ-expression~$e$:
\[
  \tto(a);\, x.\mathtt{f}();\, \tto(b);\, x.\mathtt{g}()
  \enspace.
\]
If we assume that $\mathtt{f}$ and $\mathtt{g}$ have no effects on their own,
then the call expression~$S$ for~$e$ would be 
$\{a\}\cdot \delta_{\mathtt{f}} \cup \{ab\}\cdot \delta_{\mathtt{g}}$,
where ${\delta_{\mathtt{f}}}$ and $\delta_{\mathtt{g}}$ are the signatures
of~$\mathtt{f}$ and~$\mathtt{g}$ respectively.

The information about possible method calls will be useful to approximate the traces
of non-terminating computations. A non-terminating computation in Featherweight
Java must consist of an infinite series of method calls. To compute an
approximation of all infinite traces of such sequences, it suffices to compute
the finite effect traces from one method call to the next and to consider their
concatenation in all possible infinite sequences of method calls. 
We do this by computing an expression like~$S$ for the
body of each method (Section~\ref{sec:system}). Then we compute an approximation of all infinite traces
from this information (Section~\ref{sec:infinitary}).

To work with the expressions~$T$ and~$S$, we need some notation. While we use
different notation for both kinds of expressions to emphasize their different
meaning, they are in fact instances of the same kind of formal
expressions.

\begin{definition}[Formal effect expression]
\label{def:formal:expr}
Given a set~$K$ of keys, we define the set
$\cP(\Sigma^*)\langle K\rangle$ of \emph{formal effect expressions} 
to be the set of finite partial functions 
from~$K$ to~$\cP(\Sigma^*)$. We write $\emptyset$ to denote the empty expression.
For $T, T' \in \cP(\Sigma^*)\langle K\rangle$ and $U\in \cP(\Sigma^*)$, we define:
\begin{itemize}
\item $T \subseteq T'$ if and only if\/ $T(x) \subseteq T(x')$ for all $x \in K$,
\item $T \cup T' \in \cP(\Sigma^*)\langle K\rangle$ by $(T \cup T')(x) = T(x) \cup T'(x)$,
\item $U\cdot T \in \cP(\Sigma^*)\langle K\rangle$ by $(U \cdot T)(x) = U \cdot T(x)$,
\end{itemize}
where $T(x)$ is treated as the empty set if $x \not\in \dom(T)$.
\end{definition}

We use the notation $r_1 \andrel U_1 \mid \dots \mid r_n \andrel U_n$ to represent the elements of $\cP(\Sigma^*)\langle \Reg\rangle$. Such an expression denotes the function mapping~$r_i$ to $U_i$ for $i=1,\dots, n$.
For example, we have
$(r_1 \andrel U_1 \mid r_2 \andrel U_2) \cup (r_1 \andrel U'_1 \mid r_3 \andrel U_3)
=(r_1 \andrel (U_1 \cup U'_1) \mid r_2 \andrel U_2 \mid r_3 \andrel U_3)$ and
$U \cdot (r_1 \andrel U_1 \mid r_2 \andrel U_2)
=(r_1 \andrel (U\cdot U_1) \mid r_2 \andrel (U\cdot U_2))$.
Likewise, call expressions
$U_1\cdot \delta_1 \cup \dots \cup U_n\cdot \delta_n$
are elements of $\cP(\Sigma^*)\langle \Sig \rangle$, where $\Sig$ is a set of method signatures (to be defined below). When $U_i = \{ \varepsilon \}$ we may omit it and simply write $\delta_i$ in the expressions.

\subsection{Typing Rules}
\label{sec:system}

Having explained region and effect expressions, we can now define the type system in detail. As for Featherweight Java, we need a \emph{class table} to record the region types and effect expressions of methods and fields. This is needed to formulate typing rules for method call and field access.

We call a tuple $(C,r,m,\bar{s})$ of a class $C$, a region $r$, a method $m \in \methods(C)$ and a sequence $\bar{s}$ of regions for the arguments of $m$ a \emph{method signature}. We use $\Sig \subseteq \Cls \times \Reg \times \Mtd \times \Reg^*$ to denote the set of method signatures.

\begin{definition}[Class table]
\label{def:class:table}
A \emph{class table}~$(F,M)$ consists of
\begin{itemize}
\item a \emph{field typing} $F : \Cls \times \Reg \times \Fld \rightharpoonup \Regs$ that assigns to each class $C$, region $r$ and field $f \in \fields(C)$ a set $F(C,r,f)$ of possible regions of the field~$f$, and
\item a \emph{method typing} $M : \Sig \to \cP(\Sigma^*)\langle\Reg\rangle \times \cP(\Sigma^*)\langle\Sig\rangle$ that assigns to each method signature a pair of an expression for regions and effects of terminating executions and a call expression.
\end{itemize}
A class table is required to be \emph{well-formed} in the following sense:
\begin{itemize}
\item $\mathtt{Null} \in F(C,r,f) \subseteq F(C,\mathtt{Unknown},f)$;
\item $F(C,r,f) = F(D,r,f)$ whenever $C \preceq D$ and $f \in \fields(D)$;
\item $M(C,r,m,\bar{s}) \subseteq M(D,r,m,\bar{s})$ whenever $C \preceq D$ and $m \in \methods(D)$, where the order $\subseteq$ on $\cP(\Sigma^*)\langle\Reg\rangle \times \cP(\Sigma^*)\langle\Sig\rangle$ is defined componentwise.
\end{itemize}
\end{definition}
The well-formedness conditions reflect the subtyping properties of FJ. We additionally require all entries of $F$ to contain the region $\Null$, because the fields of newly created objects are all initialized to $\itnull$. We require~$F$ to be invariant w.r.t.\ also the ``largest'' region $\Unknown$. Note that the tables may contain impossible entries whose class and its subclasses are not in the region. They are not important in the typing inference. We can set them to the ``bottom''. A typical example is $M(C,\Null,m,\bar{s}) = (\emptyset,\emptyset)$.

We denote $F(C,r,f) = R$ and $M(C,r,m,\langle s_1, \ldots, s_n \rangle) = (T,S)$ as
\begin{lstlisting}
class ${\color{black}C}@r$
  ${\color{black}f}$ : $R$
  ${\color{black}m}$($s_1$,...,$s_n$) : $T \calls S$ 
\end{lstlisting}
with which we hope to improve readability.

The typing judgments take the form $\Gamma \vdash e : T \calls S$, where $\Gamma: \Var \rightharpoonup \Reg$ is a typing environment, $e \in \Expr$ a term expression, $T\in \cP(\Sigma^*)\langle\Reg\rangle$ an expression for regions and terminating effects, and $S \in \cP(\Sigma^*)\langle\Sig\rangle$ an expression for call effects.

\begin{figure}[h]
\[
\inferrule*[left={sub}]
{\Gamma \vdash e : T \calls S\quad
 T \subseteq T'\quad
 S \subseteq S'}
{\Gamma \vdash e : T' \calls S'}
\]
\[
\inferrule*[left={weak}]
{\Gamma \vdash e : (r \andrel U) \cup T \calls S }
{\Gamma \vdash e : (\Unknown \andrel U) \cup T \calls S}
\]
\[
\inferrule*[left={prim}]
{\ }
{\Gamma \vdash \tto(a) : \Null \andrel \{a\} \calls \emptyset}
\]
\[
\inferrule*[left={var}]
{\ }
{\Gamma,\, x\of r  \vdash x : r \andrel \{\varepsilon\} \calls \emptyset}
\]
\[
\inferrule*[left={if}]
{\Gamma,\, x\of r,\, y\of s \vdash e_1 : T_1 \calls S_1 \\
 \Gamma,\, x\of r,\, y\of s \vdash e_2 : T_2 \calls S_2}
{\Gamma,\, x\of r,\, y\of s \vdash \ttifthenelse{x}{y}{e_1}{e_2} : T_1 \cup T_2 \calls S_1 \cup S_2}
\]
\[
\inferrule*[left={else}]
{\mathit{disjoint}(r, s) \\
 \Gamma,\, x\of r,\, y\of s \vdash e_2 : T_2 \calls S_2}
{\Gamma,\, x\of r,\, y\of s \vdash \ttifthenelse{x}{y}{e_1}{e_2} : T_2 \calls S_2}
\]
\[
\inferrule*[left={let}]
{\Gamma \vdash e_1 : r_1 \andrel U_1 \mid \ldots \mid r_n \andrel U_n \calls S\\
 \Gamma,\, x\of r_i \vdash e_2 : T_i \calls S_i}
{\textstyle
 \Gamma \vdash \letin{x}{e_1}{e_2} : \bigcup^n_{i=1} U_i \cdot T_i \calls S \cup \bigcup^n_{i=1} U_i \cdot S_i}
\]
\[
\inferrule*[left={null}]
{\ }
{\Gamma \vdash \ttnull : \Null \andrel \{\varepsilon\} \calls \emptyset}
\]
\[
\inferrule*[left={new}]
{\ }
{\Gamma \vdash \new{\ell}{C} : \Createdat{\ell} \andrel \{\varepsilon\} \calls \emptyset}
\]
\[
\inferrule*[left={cast}]
{\Gamma \vdash e : T \calls S}
{\Gamma \vdash \cast{D}{e} : T \calls S}
\]
\[
\inferrule*[left={get}]
{\textstyle T = \bigcup \{ s \andrel \{\varepsilon\} \mid s \in F(C,r,f) \}}
{\Gamma,\, x\of r \vdash x^C.f : T \calls \emptyset}
\]
\[
\inferrule*[left={set}]
{s \in F(C,r,f)}
{\Gamma,\, x\of r,\, y\of s \vdash x^C.f := y : s \andrel \{\varepsilon\} \calls \emptyset}
\]
\[
\inferrule*[left={call}]
{(T,\_) = M(C,r,m,\bar{s})}
{\Gamma,\, x\of r ,\, \bar{y}\of \bar{s} \vdash x^C.m(\bar{y}) : T \calls (C, r, m, \bar{s})}
\]
\caption{Region-based type and effect system}
\label{fig:typing}
\end{figure}

The typing rules are given in Figure~\ref{fig:typing}.
Rule \textsc{weak} says that any possible region $r$ can be ``weakened'' to \Unknown{}. It allows us to merge the sets of possible traces. For instance, if $e : r \andrel U \mid \Unknown \andrel U'$ then by \textsc{weak} we have $e : \Unknown \andrel (U \cup U')$.
Rule \textsc{prim} says that $\tto(a)$ has the terminating trace~$a$ and no calls. Rules \textsc{null}, \textsc{new} and \textsc{get} have similar effects.
Rule \textsc{else} says that if $x$ and $y$ are in disjoint regions, then the whole statement has the same type as the else-branch, because objects in disjoint regions cannot be identical. This illustrates how the usage of regions allows the analysis to achieve some path-sensitivity.
In rule \textsc{let}, $e_1$ may have multiple regions and effects, so we type $e_2$ with $x$ in each of these regions $r_i$ to get the effects of $e_2$ and then take the union of them. In rule \textsc{call} for method call, we simply look up the method typing~$M$ for the terminating effect and take the method signature as the call expression. 

\begin{example}[Expression Typing]
\label{firstexample}
Suppose that there are two classes $D \prec C$ with two methods $f$ and $g$. Consider the following class table:
\begin{center}
\vspace{-\abovedisplayskip}
\noindent
\begin{minipage}[t]{.24\textwidth}
\begin{lstlisting}
class ${\color{black}C}@{\Createdat{\ell_1}}$
  ${\color{black}f}$() : $\Null \andrel \{a\} \calls \dots$
  ${\color{black}g}$() : $\Null \andrel \{aa\} \calls \dots$
\end{lstlisting}
\end{minipage}
\begin{minipage}[t]{.23\textwidth}
\begin{lstlisting}
class ${\color{black}D}@{\Createdat{\ell_2}}$
  ${\color{black}f}$() : $\Null \andrel \{b\} \calls \dots$
  ${\color{black}g}$() : $\Null \andrel \{bb\} \calls \dots$
\end{lstlisting}
\end{minipage}
\end{center}
Following the typing rules, the FJ-expression
\[
\ttifthenelseg{\mathit{cond}}{(\new{\ell_1}{C})}{(\new{\ell_2}{D})}
\]
has type $\Createdat{\ell_1} \andrel \{\varepsilon\} \mid \Createdat{\ell_2} \andrel \{\varepsilon\}$. Write $e$ for the above expression. Now consider the following expressions:
\[
\letin{x}{e}{x.{f}()}
\ \ \
\letin{x}{e}{x.{f}();x.{f}()}
\ \ \
\letin{x}{e}{x.{g}()}
\]
The first expression has type $\Null \andrel \{a,b\}$, the second and third one have type $\Null \andrel \{aa,bb\}$. One may have expected the traces of the second expression to be $\{a,b\} \cdot \{a,b\}$, that is, $\{aa,ab,ba,bb\}$, because~$x$ may be in two different regions. Indeed, this is the case in the system of~\cite{EHZ:enforcing}, but not in ours. In the \textsc{let} rule, the body $x.{f}();x.{f}()$ is typed twice with the two regions of $x$, and the results $\{aa\}$ and $\{bb\}$ are joined in the end, resulting in~$\{aa,bb\}$.

This seems to be a slight defect of~\cite{EHZ:enforcing}. Indeed, the function $g$ could have been defined to have body
$\this.f();\this.f()$. In this case, the second expression would be the inlining of the third and one would perhaps expect inlining
not to lose information.
\qed
\end{example}

\begin{definition}[Well-typedness]
\label{def:well-typedness}
An FJ program $(\prec, \fields, \methods$, $\mtable)$ is \emph{well-typed} w.r.t.\ a class table $(F,M)$ if, for each method signature $(C,r,m,\bar{s}) \in \Sig$ with $C \in \Cls(r)$ and $M(C,r,m,\bar{s}) = (T,S)$ and $\mtable(C,m)=(\bar{x},e)$, the judgment $\mathtt{this} \of r,\, \bar{x} \of \bar{s} \vdash e : T \mathrel{\mathsf{calls}}S$ is derivable.
\end{definition}
The program from Example~\ref{example:code} is well-typed w.r.t.\ the class table in Example~\ref{example:typing}. Another example of a well-typed program is given in Example~\ref{example:subtyping}.

One subtle aspect of well-typedness is the condition $C \in \Cls(r)$. This
condition allows us to use the region information to statically narrow down the
possible targets of calls. This is an improvement over~\cite{EHZ:enforcing},
where the region information was not used. With the condition, well-typedness
imposed no requirement at all on the method table entry $M(C,r,m,\bar{s})$ when
$C\notin\Cls(r)$. Informally, this is because the method table entry belongs to
a method that, in region~$r$, can never be executed.

The following example illustrates how our type system makes use of region information to narrow down the possible targets of call expressions. 

\begin{example}[Regions and Subtyping]
  \label{example:subtyping}
  Consider a program with two simple classes (the return type is \lstinline{Object} only because we do not have a \lstinline{void} type):
\begin{center}
\vspace{-\abovedisplayskip}
\begin{minipage}[t]{.2\textwidth}
\begin{lstlisting}
class A {
  Object f() {
    emit($a$);
    return null;
  }
}
\end{lstlisting}
\end{minipage}
\qquad
\begin{minipage}[t]{.19\textwidth}
\begin{lstlisting}
class B extends A {
  Object f() {
    emit($b$);
    return null;
  }
}
\end{lstlisting}
\end{minipage}
\end{center}
Let $\ell_1$ and $\ell_2$ be labels such that
$\Cls(\Createdat{\ell_1}) = \{\mathtt{A}\}$ and
$\Cls(\Createdat{\ell_2}) = \{\mathtt{B}\}$.

The program is well-typed w.r.t.\ the following class table:
\begin{center}
\vspace{-\abovedisplayskip}
\begin{minipage}[t]{.21\textwidth}
\begin{lstlisting}
class A$@{\Unknown}$
  f(): $\Null \andrel \{a,b\} \calls \emptyset$
\end{lstlisting}
\end{minipage}
\qquad
\begin{minipage}[t]{.2\textwidth}
\begin{lstlisting}
class B$@{\Unknown}$
  f(): $\Null \andrel \{b\} \calls \emptyset$
\end{lstlisting}
\end{minipage}

\begin{minipage}[t]{.21\textwidth}
\begin{lstlisting}
class A$@{\Createdat{\ell_1}}$
  f(): $\Null \andrel \{a\} \calls \emptyset$
\end{lstlisting}
\end{minipage}
\qquad
\begin{minipage}[t]{.2\textwidth}
\begin{lstlisting}
class B$@{\Createdat{\ell_1}}$
  f(): $\emptyset \calls \emptyset$
\end{lstlisting}
\end{minipage}

\begin{minipage}[t]{.21\textwidth}
\begin{lstlisting}
class A$@{\Createdat{\ell_2}}$
  f(): $\Null \andrel \{b\} \calls \emptyset$
\end{lstlisting}
\end{minipage}
\qquad
\begin{minipage}[t]{.2\textwidth}
\begin{lstlisting}
class B$@{\Createdat{\ell_2}}$
  f(): $\Null \andrel \{b\} \calls \emptyset$
\end{lstlisting}
\end{minipage}
\end{center}
The types in region~$\Unknown$ should be unsurprising. The requirement that
the typings be closed under superclasses requires us to include the effect~$b$
also in class~$A$.

In region $\Createdat{\ell_1}$, the entry for~\lstinline{f} in class~\lstinline{A}
should be unsurprising. The entry for~\lstinline{f} in~\lstinline{B} can
be explained as follows. Because of 
$\Cls(\Createdat{\ell_1}) = \{\mathtt{A}\}$, we know that all objects in 
region $\Createdat{\ell_1}$ have class~\lstinline{A}. This means that the
code for~\lstinline{f} in~\lstinline{B} will never be executed and we
may assume the method to have empty effect.
The type system supports this reasoning because the definition of well-typedness
for programs makes a requirement only for entries $(C,r,m,\bar{s})$ with $C\in\Cls(r)$.
Accordingly, we may take the empty effect for the method table entry
for~\lstinline{f} in~\lstinline{B}.  

In region $\Createdat{\ell_2}$, it is the entry for~\lstinline{f} in~\lstinline{A}
that needs explanation. It is reasonable because 
$\Cls(\Createdat{\ell_1}) = \{\mathtt{B}\}$ tells us that the actual class of 
any object in region $\Createdat{\ell_2}$ is~\lstinline{B}, so that any 
call of~\lstinline{f} would go to the code from class~\lstinline{B}.
In the method table, we cannot take the empty effect for~\lstinline{f} in~\lstinline{A},
however, as the table needs to be closed under superclasses. We need to
include at least the effects of \lstinline{B.f} in those of \lstinline{A.f}.

With the above class table, consider the following code fragment.
\begin{lstlisting}
  A x = new$^{\ell_2}$ B();
  x.f();
\end{lstlisting}
The type system allows us to give~\lstinline{x} type $\Createdat{\ell_2}$. In
this case, the method call \lstinline{x.f()} is analyzed to have
effect~$\{b\}$, which is exact. We could also give~\lstinline{x} the
type $\Unknown$, but this choice would lead to an over-approximation of the
actual effect. We would get that the method call \lstinline{x.f()} has
effect~$\{a, b\}$. However, we would not be allowed to give~\lstinline{x} the
type $\Createdat{\ell_1}$.
  \qed
\end{example}

\subsection{Infinite Traces}
\label{sec:infinitary}

We now turn to analyzing possibly infinite traces of non-terminating programs. These cannot be captured directly from an inductive typing derivation. Instead, we use the type system to justify the call expressions (see Definition~\ref{def:well-typedness} of well-typed programs) and subject them the following \emph{coinductive} treatment which will lead to the correctness of the non-terminating effects (Theorem~\ref{thm:soundness:gen}).

\begin{definition}[Infinitary effect typing]
Given a method typing $M$, we write $S_\delta$ to denote the second components of $M(\delta)$. 
We call an assignment $\eta : \Sig \to \cP(\Sigma^\lo)$ an \emph{infinitary effect typing for~$M$} 
if it is the \emph{greatest solution} of the equation system $\{ \delta = S_\delta \mid \delta \in \Sig \}$.
\end{definition}

In the equation system in this definition, we treat method signatures as
variables ranging over $\cP(\Sigma^\lo)$.
We can then consider call expressions to define languages in $\cP(\Sigma^\lo)$
too, by interpreting the operations in them as language operations. 

We will prove the correctness of the non-terminating effects by induction on an approximation to the greatest solution $\eta$, but begin with an example for the definition.

\begin{example}[Finite and Infinite Effects]
\label{example:typing}
A possible class table for the program from Example~\ref{example:code} is shown below. 
It uses the regions $r_1=\Createdat{\ell_1}$, $r_2=\Createdat{\ell_2}$ and $r_3=\Createdat{\ell_3}$.
In the table we use the abbreviation $\delta_r = (\Node,r,\ttlast,())$.
\begin{lstlisting}
class Node$@\Null$
  next : $\{\Null\}$
  last() : $\emptyset \calls \emptyset$ 
\end{lstlisting}
\begin{lstlisting}
class Node$@r_1$
  next : $\{\Null\}$
  last() : $r_1 \andrel \{a\} \calls \{a\}\cdot\delta_\Null$
\end{lstlisting}
\begin{lstlisting}
class Node$@r_2$
  next : $\{\Null,\, r_1\}$
  last() : $r_1 \andrel \{aa\} \mid r_2 \andrel \{a\} \calls \{a\}\cdot\delta_{r_1}$
\end{lstlisting}
\begin{lstlisting}
class Node$@r_3$
  next : $\{\Null,\, r_3\}$
  last() : $r_3 \andrel a^+ \calls \{a\}\cdot\delta_{r_3}$
\end{lstlisting}
\begin{lstlisting}
class Test$@\Unknown$
  linear() : $r_1 \andrel \{aa\} \mid r_2 \andrel \{a\} \calls\delta_{r_2}$
  cyclic() : $r_3 \andrel a^+ \calls \delta_{r_3}$
\end{lstlisting}
The members of class \Node{} are given different types depending on the region. In region~$r_1$, the field \texttt{next} has region $\Null$. This means that \texttt{next} can only be $\itnull$. As a result, method \ttlast{} will always have~$a$ as its trace, thus its terminating effect is $\{a\}$. In region~$r_2$, the field \texttt{next} has region $\{\Null, r_1\}$. If method \ttlast{} returns \this{} then it has region $r_2$ and effect $\{a\}$; otherwise, it returns \texttt{next}.\ttlast{()} and thus has region $r_1$ and effect $\{aa\}$, where the second $a$ is emitted by \ttlast{} at $r_1$. This explains the annotation $r_1 \andrel \{aa\} \mid r_2 \andrel \{a\}$. In region $r_3$, we have a circularity, as \texttt{next} has also region~$r_3$.

We have the following equation system from the method typing:
\begin{align*}
\delta_\Null &= \emptyset
&
\delta_{r_3} &= \{a\} \cdot \delta_{r_3}
\\
\delta_{r_1}&= \{a\} \cdot \delta_{\Null}
&
(\Test,\Unknown,\linear,()) &= \delta_{r_2}
\\
\delta_{r_2} &= \{a\} \cdot \delta_{r_1}
&
(\Test,\Unknown,\cyclic,()) &= \delta_{r_3}
\end{align*}
The greatest solution $\eta$ maps $\delta_\Null$, $\delta_{r_1}$, $\delta_{r_2}$ and $(\Test,\Unknown$, $\linear,())$ to $\emptyset$, and $\delta_{r_3}$ and $(\Test,\Unknown,\cyclic,())$ to $a^\omega$, and is suitable as an infinitary effect typing.
\qed
\end{example}

\subsection{Soundness of the Type System}
\label{sec:soundness}

To formulate the soundness of the region type system w.r.t.\ the operational semantics, we give a formal interpretation of regions as a relation $(v,h) \vdash r$, read as ``the value $v$ at heap $h$ satisfies the property $r$'', generated from the following rules.
\[
\inferrule*
{ }
{(\itnull,h) \vdash \Null}
\qquad
\inferrule*
{ }
{(v,h) \vdash \Unknown}
\]
\[
\inferrule*
{h(l) = (C,G,\ell) \\
 C \in \Cls(\Createdat \ell)}
{(l,h) \vdash \Createdat \ell}
\]
It is lifted to a relation between stores and typing environments and one between heaps and field typings as follows:
\begin{itemize}
\item $(s,h) \vdash \Gamma$ iff $(s(x),h) \vdash r$ for all $(x\of r) \in \Gamma$.
\item $h \vdash F$ iff $(G(f),h) \vdash F(C,r,f)$ for all $l \in \dom(h)$ with $h(l)=(C,G,\_)$ and for all $r,f$ with $(C,r,f) \in \dom(F)$.
\end{itemize}
In words, $(s,h) \vdash \Gamma$ expresses that all the values in the store~$s$ satisfy the properties specified in the environment $\Gamma$, and $h \vdash F$ says that the fields of each object in the heap~$h$ satisfy the properties specified in the field typing $F$. We write $(s,h) \vdash \Gamma, F$ to denote the conjunction of $(s,h) \vdash \Gamma$ and $h \vdash F$.

\begin{theorem}[Soundness]
\label{thm:soundness:gen}
Let $P$ be a well-typed program w.r.t.\ a class table $(F,M)$. Let~$\eta$ be an infinitary effect typing for~$M$. For any\/ $\Gamma$, $e$, $T$, $S$, $s$, $h$, $v$, $h'$ and $w$ with
\[
\Gamma \vdash e : T \calls S
\quad \text{and} \quad
(s,h) \vdash \Gamma, F
\]
\begin{enumerate}
\item if\/ $(s,h) \vdash e \Downarrow v, h' \andrel w$, then $(s,h') \vdash \Gamma,F$ and $(v,h') \vdash r$ and $w \in U$ for some $(r \andrel U) \in T$;
\item if\/ $(s,h) \vdash e \Uparrow \andrel w$, then $w \in S(\eta)$, where $S(\eta) \subseteq \Sigma^\lo$ is obtained by substituting every occurrence of\/ $\delta$ in $S$ by $\eta(\delta)$.
\end{enumerate}
\end{theorem}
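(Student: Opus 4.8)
The plan is to prove the two parts of the theorem largely independently, with part~(1) supplying the finite-trace facts that part~(2) consumes. I would prove part~(1) by induction on the derivation of the big-step relation $(s,h) \vdash e \Downarrow v, h' \andrel w$. Since \textsc{sub} and \textsc{weak} are not syntax-directed, each case opens with an inversion step: from $\Gamma \vdash e : T \calls S$ I extract the type $T_0 \subseteq T$ produced by the syntax-directed rule matching the head of~$e$, noting that passing to~$T$ only adds options and widens trace sets, while \textsc{weak} merely relaxes a region to $\Unknown$; as $(v,h') \vdash \Unknown$ always holds and the goal is the existence of some option $(r \andrel U) \in T$, it suffices to establish the goal for~$T_0$. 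The interesting cases are \textsc{let}, where I match the region~$r_i$ of the value returned by $e_1$ against the corresponding branch, feed the extended context $\Gamma, x \of r_i$ to the inductive hypothesis for~$e_2$, and concatenate the two finite traces; and \textsc{call}, where I invoke well-typedness to type the dispatched method body and apply the hypothesis to its sub-evaluation. Throughout I must carry heap/field consistency $h \vdash F$ (using that region satisfaction is stable under field updates and fresh allocations) and verify that the receiver's dynamic class lies in $\Cls(r)$, so that the well-typedness requirement — imposed only for signatures with $C \in \Cls(r)$ — is available, bridging static and dynamic dispatch via well-formedness of~$M$ under superclasses.

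For part~(2) I would first isolate a one-step divergence-decomposition lemma: if $(s,h) \vdash e \Uparrow \andrel w$ and $\Gamma \vdash e : T \calls S$ with $(s,h) \vdash \Gamma, F$, then there are a term $U' \cdot \delta$ in~$S$, a finite prefix $u \in U'$, and a divergent evaluation of the body of~$\delta$ producing~$w_1$ with $w = u\, w_1$ and with the body's context and~$F$ satisfied. I would prove this by induction on the \emph{finite syntax} of~$e$, descending through \textsc{let}, \textsc{if} and \textsc{cast} into the diverging subexpression and invoking part~(1) to account for any converging prefix, until a method call whose body diverges is reached; tracking how the typing rules accumulate the call expression shows the prefix lands in the appropriate~$U'$. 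The key point is that this is structural recursion on the finite term, not on the non-well-founded divergence derivation.

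With the decomposition in hand, let $\Phi$ be the monotone operator on $\Sig \to \cP(\Sigma^\lo)$ given by $\Phi(\zeta)(\delta) = S_\delta(\zeta)$, and let $\eta^\alpha = \Phi^\alpha(\top)$ be its downward iteration from the top element, so that $\eta = \eta^\theta$ at the closure ordinal~$\theta$. I would prove by transfinite induction on~$\alpha$ the statement $(P_\alpha)$: every divergent, well-typed~$e$ with call expression~$S$ satisfies $w \in S(\eta^\alpha)$. In each case I apply the decomposition once to obtain $w = u\, w_1$ and reduce the goal to $w_1 \in \eta^\alpha(\delta)$. For $\alpha = 0$ this is immediate since $\eta^0(\delta) = \Sigma^\lo$; for $\alpha = \beta+1$ it follows from $(P_\beta)$ applied to the diverging body, as $\eta^{\beta+1}(\delta) = S_\delta(\eta^\beta)$; and for limit~$\alpha$ I use, for each $\beta < \alpha$, that $w_1 \in \eta^{\beta+1}(\delta) \subseteq \eta^\beta(\delta)$ — again by $(P_\beta)$ on the body together with monotonicity of the descending iteration — whence $w_1 \in \bigcap_{\beta<\alpha}\eta^\beta(\delta) = \eta^\alpha(\delta)$. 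Instantiating $\alpha = \theta$ then yields $w \in S(\eta)$.

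I expect the limit case of this transfinite induction to be the main obstacle. The naive route — prove $w \in S(\eta^\alpha)$ for all~$\alpha$ and then intersect — fails, because the left-concatenation hidden in~$S$ does not commute with intersections of descending chains: the split point between prefix and suffix can run off to infinity. The decomposition lemma is exactly what sidesteps this, since it lets me re-expose the single outermost diverging call at \emph{every} approximation level and push the intersection inward onto the suffix~$w_1$, where it is discharged by the inclusion $\eta^{\beta+1} \subseteq \eta^\beta$ rather than by any continuity of~$\Phi$. A secondary difficulty, shared with part~(1), is the bookkeeping that keeps region, class and heap information consistent across the finite prefix, so that the $C \in \Cls(r)$ side-condition of well-typedness is genuinely available at the diverging call.
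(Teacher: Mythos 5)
Your part (1) is in substance the paper's proof: the paper inducts on the \emph{sum} of the depths of the typing derivation and the evaluation derivation (deferring details to the analogous result of Erbatur et al.), which handles \textsc{sub} and \textsc{weak} by decreasing the typing depth exactly where you use an inversion step; the two devices are interchangeable. Part (2) is where you genuinely diverge. The paper defines only the $\omega$-indexed approximants $\eta_0(\delta)=\Sigma^\lo$, $\eta_{n+1}(\delta)=S_\delta(\eta_n)$, proves $w\in S(\eta_n)$ for every $n\in\N$ by a double induction on $n$ and on the typing derivation --- the inner induction, with rule inversion on the coinductive divergence judgment at \textsc{let} and \textsc{call}, is your decomposition lemma inlined --- and then concludes via two facts it asserts but does not prove: that $\eta(\delta)=\bigcap_{n\in\N}\eta_n(\delta)$, and a standalone lemma that $\bigcap_{n}S(\eta_n)\subseteq S(\eta)$. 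These are precisely the co-continuity-flavoured claims you flag as the obstacle (left concatenation does not commute with intersections of descending chains, since the split point can escape to infinity), and your route does not need them: by iterating transfinitely to the closure ordinal and re-applying the decomposition at each level, you fix a single split $w=uw_1$ and a single signature $\delta$ once and for all, so at limit stages the intersection lands on the suffix $w_1$ alone, where the descending chain discharges it. The paper's version buys elementarity (no ordinals) and an approximation lemma that is quotable on its own; yours buys self-containedness, since the burden the paper shifts onto its unproven intersection lemma is exactly what your decomposition lemma carries. Both arguments ultimately rest on the same two load-bearing observations: a divergent FJ evaluation must reach a method call after a finite syntactic descent (so part (1) accounts for the finite prefix), and well-formedness of $M$ under subclassing, together with the side condition $C\in\Cls(r)$, bridges the statically recorded signature to the dynamically dispatched body.
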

\begin{proof}[Proof of Theorem~\ref{thm:soundness:gen}(1)]
We carry out the proof by induction over the sum of the depth of the derivation of the typing judgment and the depth of the derivation of the operational semantics judgment. We refer the reader to a similar proof of \cite[Theorem~5]{EHZ:enforcing}.
\end{proof}

We focus on the soundness of the infinitary effect analysis. Recall that any expression $S \in \cP(\Sigma^*)\langle \Sig \rangle$ is of the form $\bigcup_{\delta \in \Delta} A_\delta \cdot \delta$ for some $\Delta \subseteq \Sig$. If we treat the signatures as variables then each expression gives a $\cP(\Sigma^\lo)$-valued function, \ie, for any assignment $\eta : \Sig \to \cP(\Sigma^\lo)$, we obtain a language $S(\eta) \subseteq \Sigma^\lo$ by substituting all $\delta$ in $S$ by $\eta(\delta)$. It is obvious that if $S \subseteq S'$ then $S(\eta) \subseteq S'(\eta)$ for any $\eta$.

For the proof of Theorem~\ref{thm:soundness:gen}(2), we use the following approximations of the greatest solution: Given a class table~$(F,M)$, let $\{ S_\delta \mid \delta \in \Sig \}$ be the call expressions encoded in~$M$. We define for each $n \in \N$ an assignment function $\eta_n : \Sig \to \cP(\Sigma^\lo)$ by
\[
\eta_0(\delta) = \Sigma^\lo
\qquad
\eta_{n+1}(\delta) = S_\delta(\eta_n).
\]
Let $\eta$ be the greatest solution of $\{ \delta = S_\delta \mid \delta \in \Sig \}$. Then we have $\eta(\delta) = \bigcap_{i \in \N} \eta_i(\delta)$ for every $\delta \in \Sig$. More generally, we have $\eta(\delta) = \bigcap_{i \in N} \eta_i(\delta)$ for any infinite subset $N \subseteq \N$.

We firstly show that the above approximations satisfy the soundness theorem in the following sense:
\begin{lemma}
\label{lem:soundness:approx}
Given a well-typed program w.r.t.\ $(F,M)$, let $\eta_n$ be the approximations to the greatest solution as defined above. For any\/ $n$, $\Gamma$, $e$, $T$, $S$, $s$, $h$ and $w$, if
\[
\Gamma \vdash e : T \calls S
\quad \text{and} \quad
(s,h) \vdash \Gamma,F
\quad \text{and} \quad
(s,h) \vdash e \Uparrow \andrel w
\]
then $w \in S(\eta_n)$.
\end{lemma}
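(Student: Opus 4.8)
The plan is to prove the lemma by a lexicographic induction on the pair $(n, d)$, where $d$ is the height of the derivation of the typing judgment $\Gamma \vdash e : T \calls S$, and within each case to invert the coinductively generated divergence judgment $(s,h) \vdash e \Uparrow \andrel w$ according to the shape of $e$. The point of the approximation index~$n$ is exactly to furnish a well-founded measure for the otherwise infinite divergence derivation: every method call consumes one level of approximation, whereas every other divergence step strictly shrinks the expression and hence its typing derivation. The structural typing rules are dispatched first. For \textsc{sub} and \textsc{weak} the expression and the divergence judgment are unchanged and $S$ only grows, so the claim follows from the inner induction hypothesis together with monotonicity of substitution (if $S_0 \subseteq S$ then $S_0(\eta_n) \subseteq S(\eta_n)$). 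For \textsc{prim}, \textsc{var}, \textsc{null}, \textsc{new}, \textsc{get} and \textsc{set} the expression always terminates, so no divergence rule applies and the case is vacuous. For \textsc{cast} and \textsc{if}/\textsc{else}, divergence passes to an immediate subexpression whose typing is a subderivation, so the inner induction hypothesis applies directly and $w \in S(\eta_n)$ since $S$ contains the subexpression's call expression; the \textsc{else} case additionally invokes soundness of disjointness to argue that only the else-branch can be taken.

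The let-rule is the first interesting case. Inverting the divergence judgment for $\letin{x}{e_1}{e_2}$ gives two subcases. If $e_1$ itself diverges with trace $w$, the inner induction hypothesis on the premise for $e_1$ places $w$ in $S'(\eta_n)$, and $S' \subseteq S$ finishes it. If instead $e_1$ terminates with value $v$, heap $h'$ and trace $w_1$ and then $e_2$ diverges with $w_2$ where $w = w_1 w_2$, I would invoke Theorem~\ref{thm:soundness:gen}(1) on the terminating evaluation of $e_1$ to obtain an index~$j$ with $(v,h') \vdash r_j$, $w_1 \in U_j$ and $(s,h') \vdash \Gamma, F$. The updated store then satisfies $\Gamma, x\of r_j$ at $h'$, so the inner induction hypothesis on the premise for $e_2$ yields $w_2 \in S_j(\eta_n)$, whence $w = w_1 w_2 \in U_j \cdot S_j(\eta_n) \subseteq S(\eta_n)$.

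The method-call case is the crux and the only place where the outer induction on~$n$ is used. Here $e = x^C.m(\bar y)$ and $S = \{\varepsilon\}\cdot\delta$ with $\delta = (C,r,m,\bar s)$, so $S(\eta_n) = \eta_n(\delta)$. For $n = 0$ the claim is immediate, since $\eta_0(\delta) = \Sigma^\lo \ni w$. For $n = n'+1$, the receiver lies in region~$r$, so its dynamic class~$C'$ satisfies $C' \in \Cls(r)$ and $C' \preceq C$. Well-typedness of the program for the signature $\delta' = (C', r, m, \bar s)$ supplies a typing of the method body; its body store must be checked to satisfy the entry environment $\this\of r,\, \bar x \of \bar s$ together with~$F$, and the body diverges with the very same trace~$w$ since the call emits nothing of its own. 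The outer induction hypothesis for $n'$ then gives $w \in S_{\delta'}(\eta_{n'})$, and the well-formedness condition $M(C',r,m,\bar s) \subseteq M(C,r,m,\bar s)$ for $C' \preceq C$ together with monotonicity upgrades this to $w \in S_\delta(\eta_{n'}) = \eta_{n'+1}(\delta) = S(\eta_n)$.

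I expect the main obstacle to be the bookkeeping in this call case: correctly relating the static class annotation~$C$ to the dynamic class~$C'$ of the receiver through the subtyping well-formedness of the class table, and verifying that the store built for the method body satisfies the region environment of the corresponding method-table entry. The conceptual subtlety—that the divergence relation is coinductive and hence unavailable for ordinary induction—is resolved by the observation that exactly one level of the approximation $\eta_n$ is consumed at each method call, which is precisely what makes $(n, d)$ a genuine well-founded measure.
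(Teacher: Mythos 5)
Your proposal is correct and follows essentially the same route as the paper's proof: a nested (lexicographic) induction on the approximation index~$n$ and the depth of the typing derivation, with Theorem~\ref{thm:soundness:gen}(1) invoked for the terminating prefix in the \textsc{let} case and the outer induction on~$n$ consumed exactly at the \textsc{call} case via the well-formedness condition $S' \subseteq S_{C,r,m,\bar s}$ and the identity $S_\delta(\eta_{n'}) = \eta_{n'+1}(\delta)$. The only cosmetic difference is that the paper dispatches the base case $n=0$ globally up front rather than inside the call case, which is an equivalent organization.
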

\begin{proof}
By induction on $n$ and on the depth of the typing derivation.
When $n=0$, it is trivial because $\eta_0$ assigns $\Sigma^\lo$ to all variables.
Assume that the statement holds for $n$. To show that it holds for $n+1$, we perform an induction on the depth of the typing derivation.

If the last used rule of the typing derivation is \textsc{sub} or \textsc{weak}, then the induction hypothesis gives the desired result. Otherwise, we perform a case distinction over the last rule used in the derivation of the operational semantics judgment. The cases of $x$, $\tto(a)$, $\itnull$ and $\new{}{C}$ are ruled out because their execution always terminates. Here we check the following cases as examples:

Suppose the last used typing rule is \textsc{let}, \ie, the judgment is
\[
\textstyle
\Gamma \vdash \letin{x}{e_1}{e_2} : \bigcup^n_{i=1} U_i \cdot T_i \calls S \cup \bigcup^n_{i=1} U_i \cdot S_i.
\]
By rule inversion, we have $\Gamma \vdash e_1 : r_1 \andrel U_1 \mid \ldots \mid r_n \andrel U_n \calls S$ and $\Gamma,\, x\of r_i \vdash e_2 : T_i \calls S_i$ for all $i \in \{1,\ldots,n\}$.
Assume $(s,h) \vdash \Gamma,F$ and $(s,h) \vdash \letin{x}{e_1}{e_2} \Uparrow \andrel w$. By rule inversion, we have two possibilities:
\begin{itemize}
\item If $(s,h) \vdash e_1 \Downarrow v_1 , h_1 \andrel w_1$ and $(s[x\mapsto v_1],h_1) \vdash e_2 \Uparrow \andrel w_2$, then $w = w_1w_2$. By Theorem~\ref{thm:soundness:gen}(1), we have $(s,h_1) \vdash \Gamma,F$ and $(v_1,h_1) \vdash r_i$ and $w_1 \in U_i$ for some $i$. From the former we get $(s[x\mapsto v_1],h_1) \vdash (\Gamma,x\of r_i),F$. Then by induction hypothesis we know $w_2 \in S_i(\eta_{n+1})$, and thus $w = w_1w_2 \in U_i \cdot S_i(\eta_{n+1}) \subseteq (S \cup \bigcup^n_{i=1} U_i \cdot S_i)(\eta_{n+1})$.
\item If $(s,h) \vdash e_1 \Uparrow \andrel w$, then by induction hypothesis we have $w \in S(\eta_{n+1}) \subseteq (S \cup \bigcup^n_{i=1} U_i \cdot S_i)(\eta_{n+1})$.
\end{itemize}

Suppose the last used typing rule is \textsc{call}, \ie, the judgment is
\[
\textstyle
\Gamma,\, x \of r ,\, \bar{y} \of \bar{s} \vdash x^C.m(\bar{y}) : T \calls (C,r,m,\bar{s})
\]
where $(T,\_) = M(C,r,m,\bar{s})$. Assume $(s,h) \vdash (\Gamma, x \of r, \bar{y} \of \bar{s}), F$ and $(s,h) \vdash x.m(\bar{y}) \Uparrow \andrel w$. The goal is to show $w \in \eta_{n+1}(C,r,m,\bar{s})$.
Let $s(x) = l$ and $h(l)=(D,\_,\_)$ and $\mtable(D,m)=(\bar{z},e)$ and $s' = [\this \mapsto l]\cup[z_i \mapsto s(y_i)]_{i \in \{1,\ldots,|\bar{z}|\}}$. We have $(s',h) \vdash e \Uparrow \andrel w$ by rule inversion. From the assumption $(s,h) \vdash (\Gamma, x \of r, \bar{y} \of \bar{s}), F$ we have $(l,h) \vdash r$ which implies $D \in \Cls(r)$. Because the program is well-typed, we have $\Gamma' \vdash e : T' \calls S'$ where $\Gamma' = \this \of r, \bar{z} \of \bar{s}$ and $(T',S') = M(D,r,m,\bar{s})$. From $(s,h) \vdash (\Gamma, x \of r, \bar{y} \of \bar{s}), F$ we have also $(s',h) \vdash \Gamma',F$. Then the induction hypothesis gives us $w \in S'(\eta_n)$. Because $M$ is well-formed and $D \preceq C$, we have $S' \subseteq S_{C,r,m,\bar{s}}$ and thus $w \in S_{C,r,m,\bar{s}}(\eta_n) = \eta_{n+1}(C,r,m,\bar{s})$ by the definition of the approximations.

We leave the remaining cases to the reader.
\end{proof}

\begin{lemma}
For any call expression $S$ and infinite set $N \subseteq \N$, we have $\bigcap_{i \in N} S(\eta_i) \subseteq S(\eta)$.
\end{lemma}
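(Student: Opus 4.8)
The plan is to deduce the statement from the identity $\eta(\delta)=\bigcap_{i\in N}\eta_i(\delta)$ that the paper has just recorded, rather than reasoning about $S$ directly. The reverse inclusion $S(\eta)\subseteq\bigcap_{i\in N}S(\eta_i)$ is immediate by monotonicity (since $\eta\subseteq\eta_i$ for every $i$), so the real content is the stated inclusion. The key move is to treat the whole call expression $S$ as defining one more equation: extend the system $\{\delta=S_\delta\mid\delta\in\Sig\}$ by a fresh key $\star\notin\Sig$ together with the equation $\star=S$, where $S=\bigcup_{\delta\in\Delta}A_\delta\cdot\delta$ with $\Delta\subseteq\Sig$. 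This is again a system of language equations built from call expressions, so the constructions and the greatest-solution-as-intersection fact of this subsection apply to it verbatim.

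For the extended system I would first note two bookkeeping facts. Since $\star$ occurs on no right-hand side, the greatest solution $\eta'$ of the extended system restricts to $\eta$ on $\Sig$ and satisfies $\eta'(\star)=S(\eta)$; likewise its approximants restrict to the old ones on $\Sig$, so $\eta'_0(\star)=\Sigma^\lo$ and $\eta'_{n+1}(\star)=S(\eta'_n)=S(\eta_n)$ for all $n$. Applying the recorded identity to the extended system, at the key $\star$, with the infinite index set $\{i+1\mid i\in N\}$ then gives
\[
S(\eta)=\eta'(\star)=\bigcap_{i\in N}\eta'_{i+1}(\star)=\bigcap_{i\in N}S(\eta_i),
\]
which is actually the equality of the two sides; the claimed inclusion is one half of it.

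I expect the genuinely delicate point to be hidden in this appeal to the recorded identity for the enlarged system, and it is worth being explicit about why a more naive argument fails. One cannot simply distribute the intersection through $S$: although finiteness of $\Delta$ together with the fact that each $A_\delta\cdot\eta_i(\delta)$ is decreasing in $i$ does reduce the problem to single terms (a decreasing chain of nonempty subsets of the finite set $\Delta$ has nonempty intersection, whence $\bigcap_{i}S(\eta_i)=\bigcup_{\delta\in\Delta}\bigcap_i A_\delta\cdot\eta_i(\delta)$), the remaining step $\bigcap_i A_\delta\cdot\eta_i(\delta)\subseteq A_\delta\cdot\bigcap_i\eta_i(\delta)$ is \emph{not} a formal consequence of $\eta(\delta)=\bigcap_i\eta_i(\delta)$: prefixing by a language does not commute with intersections of arbitrary decreasing chains (a word may admit ever-longer witnessing prefixes whose tails escape to the empty intersection). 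What rescues it is precisely that the $\eta_i$ arise by unfolding the system from $\Sigma^\lo$, so that membership of a word in $\eta_i(\delta)$ is determined by one of its finite prefixes lying in a length-$i$ unfolding of the call graph; this prefix-open structure is exactly what the recorded identity encapsulates, and extending it by $\star=S$ lets me reuse it. If one is wary of invoking that identity for an enlarged system, the same single-term step can be proved from scratch by a König's-lemma argument on the finite call graph: from membership in $A_\delta\cdot\eta_i(\delta)$ for all $i$ one extracts an infinite unfolding whose read prefixes converge to a tail of $w$, witnessing membership in $A_\delta\cdot\eta(\delta)$ through the coinductive description of the greatest solution.
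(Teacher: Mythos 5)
The paper states this lemma without any proof, so there is no official argument to compare against; I can only assess your proposal on its own terms. The formal reduction you set up is clean: adjoining a fresh key $\star$ with the equation $\star = S$ gives an extended system whose greatest solution restricts to $\eta$ on $\Sig$ and sends $\star$ to $S(\eta)$, and whose approximants satisfy $\eta'_{i+1}(\star) = S(\eta_i)$; if the displayed identity $\eta(\delta)=\bigcap_{i\in N}\eta_i(\delta)$ were available for arbitrary systems of call expressions, your conclusion would follow. You have also correctly isolated the crux: after the pigeonhole over the finite key set, everything reduces to $\bigcap_i A\cdot\eta_i(\delta)\subseteq A\cdot\bigcap_i\eta_i(\delta)$ for a single coefficient language $A$, and you rightly observe that this is not a formal consequence of monotonicity.

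The gap is that your claimed rescue of this step does not work, and in fact the step (hence the lemma as stated, and the displayed identity it rests on) is false when coefficient languages are infinite. Your K\"onig's-lemma argument needs the tree of partial decompositions of $w$ along the call graph to be finitely branching; the branching over the next signature is finite, but the branching over the next split position is infinite whenever a word has infinitely many prefixes in a coefficient language. Concretely, take $\Sigma=\{a,b\}$, a signature $\delta'$ with $S_{\delta'}=\{ab\}\cdot\delta'$, and the call expression $S=\Sigma^*\cdot\delta'$ (legitimate, since the lemma quantifies over all call expressions and \textsc{sub} and \textsc{let} produce arbitrary coefficient languages). Then $\eta_i(\delta')=(ab)^i\cdot\Sigma^\lo$ and $\eta(\delta')=(ab)^\omega$, so $S(\eta_i)$ is the set of words containing $(ab)^i$ as a factor while $S(\eta)=\Sigma^*\cdot(ab)^\omega$ is the set of words with a suffix equal to $(ab)^\omega$. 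The word $w=(ab)\,a\,(ab)^2\,a\,(ab)^3\,a\cdots$ contains $(ab)^i$ as a factor for every $i$, hence $w\in\bigcap_i S(\eta_i)$, but it contains the factor $aa$ infinitely often, so no suffix of it is $(ab)^\omega$ and $w\notin S(\eta)$; the witnessing split points are forced to escape to infinity, which is exactly the failure mode you flagged and then dismissed. So the lemma needs an extra hypothesis (e.g.\ that every word has only finitely many prefixes in each coefficient language, which restores finite branching), or the soundness argument must bypass the approximants $\eta_i$ and argue directly against the coinductive characterization of the greatest solution; your reduction, while elegant, only relocates this obligation into the unproven identity rather than discharging it.
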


The second half of the soundness theorem follows from the above lemmas:
\begin{proof}[Proof of Theorem~\ref{thm:soundness:gen}(2)]
If $\Gamma \vdash e : T \calls S$ and $(s,h) \vdash \Gamma,F$ and $(s,h) \vdash e \Uparrow \andrel w$, then we have $w \in \bigcap_{n \in \N} S(\eta_n) \subseteq S(\eta)$ by the above lemmas.
\end{proof}

Therefore, if a program is well-typed, then the effects encoded in the method typing are correct in the following sense:

\begin{corollary}[Soundness]
\label{thm:M:sound}
Suppose that a program is well-typed w.r.t.\ $(F,M)$ and that $\eta$ is an infinitary effect typing for~$M$.
Then, for any $(C,r,m,\bar{s}) \in \Sig$ with $M(C,r,m,\bar{s}) = (T,S)$ and $\eta(C,r,m,\bar{s}) = V$ and for any $x\of r$ and $\bar{y}\of\bar{s}$, we have
\begin{enumerate}
\item if $x.m(\bar{y})$ evaluates to a value $v$ and generates a trace $w$, then $v$ is in region $t$ and $w \in U$ for some $(t \andrel U) \in T$; and
\item if $x.m(\bar{y})$ diverges and generates a trace $w$, then $w \in V$.
\end{enumerate}
\end{corollary}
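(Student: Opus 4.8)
The plan is to obtain this corollary as a direct instantiation of Theorem~\ref{thm:soundness:gen} at the method-call expression $x^C.m(\bar{y})$. The only typing rule whose conclusion matches such an expression is \textsc{call}, so I would start there. Since $M(C,r,m,\bar{s}) = (T,S)$ has first component $T$, rule \textsc{call} yields the derivation
\[
x\of r,\ \bar{y}\of\bar{s} \vdash x^C.m(\bar{y}) : T \calls (C,r,m,\bar{s}),
\]
where the call expression on the right is the singleton signature $\delta := (C,r,m,\bar{s})$ (with implicit coefficient $\{\varepsilon\}$). Note that the $S$ recorded in $M(C,r,m,\bar{s})$ does \emph{not} appear in this judgment; its role is already absorbed into the proof of the soundness theorem via the \textsc{call} case of Lemma~\ref{lem:soundness:approx}, which also handles the discrepancy between the static annotation~$C$ and the actual class~$D\preceq C$ of the object at runtime.

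Next I would make the informal phrasing ``for any $x\of r$ and $\bar{y}\of\bar{s}$'' precise. Writing $\Gamma = (x\of r,\ \bar{y}\of\bar{s})$, this ranges over all states $(s,h)$ with $(s,h)\vdash\Gamma$ and $h\vdash F$, that is, $(s,h)\vdash\Gamma,F$, which is exactly the precondition of Theorem~\ref{thm:soundness:gen}. I then invoke the theorem with the above judgment and any such state. Part~(1) of the theorem delivers part~(1) of the corollary verbatim: whenever $(s,h)\vdash x^C.m(\bar{y})\Downarrow v,h' \andrel w$, we get $(v,h')\vdash t$ and $w\in U$ for some $(t\andrel U)\in T$.

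For the divergent case I would unwind the call-expression substitution. Part~(2) of the theorem gives $w\in\delta(\eta)$, where $\delta = (C,r,m,\bar{s})$ abbreviates the expression $\{\varepsilon\}\cdot\delta$. Substituting $\eta$ for the signature yields
\[
\delta(\eta) = \{\varepsilon\}\cdot\eta(\delta) = \eta(C,r,m,\bar{s}) = V,
\]
so $w\in V$, which is part~(2). The whole argument is thus a one-line instantiation of the soundness theorem; the only points needing care are purely a matter of bookkeeping, namely reading the informal hypothesis as the state condition $(s,h)\vdash\Gamma,F$ and checking that the singleton call expression $(C,r,m,\bar{s})$ evaluates under $\eta$ to exactly $\eta(C,r,m,\bar{s}) = V$. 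No new induction or automata-theoretic construction is required, since all the genuine work already resides in Theorem~\ref{thm:soundness:gen} and Lemma~\ref{lem:soundness:approx}.
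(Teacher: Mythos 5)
Your proposal is correct and matches the paper's intent exactly: the paper states this corollary as an immediate consequence of Theorem~\ref{thm:soundness:gen}, and your argument is precisely the instantiation that makes this explicit --- deriving the judgment for $x^C.m(\bar{y})$ via the \textsc{call} rule, reading the informal hypothesis as $(s,h)\vdash\Gamma,F$, and observing that the singleton call expression evaluates under $\eta$ to $\eta(C,r,m,\bar{s})=V$. The bookkeeping remarks (that the stored $S$ does not appear in the judgment but is absorbed into the fixed-point equation $\eta(\delta)=S_\delta(\eta)$) are accurate and add useful clarity.
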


As a result, to verify whether a program adheres to a guideline, it suffices to check if the effects encoded in the typings $M$ and $\eta$ are allowed by the guideline.

\section{B\"uchi Effect Type System}
\label{sec:buechi}

The effect type system introduced in Section~\ref{sec:type:system:sets} is not yet suitable for practical use. Because effects are given as arbitrary languages, we have no direct way to compute greatest fixed points of equation systems over $\cP(\Sigma^\lo)$. In this section, we introduce a system (Section~\ref{sec:buechi:system}) where effect annotations are taken from a \emph{finite} structure to represent the languages over~$\Sigma$ (Section~\ref{sec:buechi:abstraction}) using the methods from abstract interpretation~\cite{cousot}. In Section~\ref{sec:type:inference}, we present a type inference algorithm that infers a class table for a given program, provided a standard FJ typing for the program is given.

\subsection{B\"uchi Abstraction}
\label{sec:buechi:abstraction}

Instead of $\cP(\Sigma^*)$ and $\cP(\Sigma^\lo)$, we work with a finite structure that abstracts the languages over $\Sigma$ for effect annotations, over which we can compute fixed points.

\begin{definition}[B\"uchi algebra]
Let $\cM_*$, $\cM_\lo$ be join-semilattices, that is, a partial order together with a join operation. We use the symbols $\sqsubseteq$, $\sqcup$ to refer to their generic structure of join-semilattices. We call $(\cM_*,\cM_\lo)$ a \emph{B\"uchi algebra} if it is equipped with the following operations:
\begin{itemize}
\item a binary operation on $\cM_*$, written as $\cA \cdot \cB$ for $\cA,\cB \in \cM_*$,
\item a mapping $\cM_* \times \cM_\lo \to \cM_\lo$, also written as $\cA \cdot \cV$ for $\cA \in \cM_*$ and $\cV \in \cM_\lo$, and
\item an operation $(-)^\omega : \cM_* \to \cM_\lo$,
\end{itemize}
such that all the operations are monotone
and the binary operations are associative, \ie, $\cA \cdot (\cB \cdot \cU) = (\cA \cdot \cB) \cdot \cU$ for any $\cA,\cB \in \cM_*$ and $\cU \in \cM_* \cup \cM_\lo$.
\end{definition}

For instance, $\cP(\Sigma^*)$ and $\cP(\Sigma^\lo)$ form a B\"uchi algebra. They are complete lattices. But note that meets are not required here because set intersections are not needed in the type system. We are looking for a \emph{finite} B\"uchi algebra $(\cM_*,\cM_\lo)$ that abstracts them.

\begin{remark}
The reader may have found our notion of B\"uchi algebra similar to the one of $\omega$-semigroup or Wilke algebra~\cite{CPP:automata:semigroups,PP:semigroups:automata}. One difference is that the carrier is equipped with a lattice structure in order to abstract the operations on languages. And the other is that there is no condition on the operation of $\omega$-iteration. We learned the concept of B\"uchi algebra from Martin Hofmann. In~\cite{10.5555/3329995.3330055} he and Chen define a B\"uchi algebra to be an $\omega$-semigroup whose carriers are complete lattices. Our specific definition requires weaker conditions as it was guided by the needs of this paper. Our main interest is the concept of B\"uchi abstraction which is defined below. We introduce B\"uchi algebras to capture the essential structure for it. With our definitions, a B\"uchi algebra may not necessarily be a Wilke algebra, but every B\"uchi abstraction is a Wilke algebra (see Proposition~\ref{prop:abs:properties}(5)).
\end{remark}

Recall that a \emph{Galois insertion} between partial orders $A,C$ consists of monotone maps $\alpha : C \to A$ and $\gamma : A \to C$ such that, for all $a \in A$ and $c \in C$,
\[
\alpha(\gamma(a)) = a
\qquad
\gamma(\alpha(c)) \sqsupseteq c.
\]
Intuitively, the function $\alpha$ furnishes for each concrete value in $C$ its \emph{abstraction}, whereas $\gamma$ \emph{concretizes} the abstract values in $A$. Moreover, the abstraction of $c$ has a concretization lying above $c$.

Now we are ready to introduce the finite structure needed for presenting the type system and the type inference algorithm.

\begin{definition}[B\"uchi abstraction]
A \emph{B\"uchi abstraction} is a \emph{finite} B\"uchi algebra $(\cM_*,\cM_\lo)$ equipped with Galois insertions $(\alpha_*,\gamma_*)$ between $\cM_*$ and $\cP(\Sigma^*)$ and $(\alpha_\lo,\gamma_\lo)$ between $\cM_\lo$ and $\cP(\Sigma^\lo)$ such that the abstraction functions $\alpha_*$ and $\alpha_\lo$ are compatible with the algebraic structure in the sense that
\[
\begin{aligned}
\alpha_*(A \cdot B) & = \alpha_*(A) \cdot \alpha_*(B) \\
\alpha_\lo(A \cdot V) & = \alpha_*(A) \cdot \alpha_\lo(V) \\
\alpha_\lo(A^\omega) & = \alpha_*(A)^\omega
\end{aligned}
\]
for all $A,B \subseteq \Sigma^*$ and $V \subseteq \Sigma^\lo$.
\end{definition}

One may have expected more structures on B\"uchi abstraction due to the analogy to $(\cP(\Sigma^*),\cP(\Sigma^\lo))$. Some structure can be constructed. For instance, the least elements of $\cM_*$ and $\cM_\lo$ are exactly the abstraction of the empty set.
We define the finite iteration $(-)^*$ on $\cM_*$ as the least fixed point $\cA^* = \lfp(\lambda X. \alpha_*(\{\varepsilon\}) \sqcup \cA \cdot X)$, and then have $\alpha_*(A^*) = (\alpha_*(A))^*$. Moreover, we have the following properties of the B\"uchi abstraction.

\begin{proposition}
\label{prop:abs:properties}
B\"uchi abstractions have the following properties:
\begin{enumerate}
\item The abstraction functions $\alpha_*$ and $\alpha_\lo$ preserve joins, \ie,
\[
\begin{aligned}
\alpha_*(A\cup B) &= \alpha_*(A) \sqcup \alpha_*(B) \\
\alpha_\lo(U\cup V) &= \alpha_\lo(U) \sqcup \alpha_\lo(V).
\end{aligned}
\]
\item The concretization functions $\gamma_*$ and $\gamma_\lo$ may not preserve joins or concatenations, but they satisfy
\[
\begin{aligned}
\gamma_*(\cA \sqcup \cB) & \supseteq \gamma_*(\cA) \cup \gamma_*(\cB) \\
\gamma_\lo(\cU \sqcup \cV) & \supseteq \gamma_\lo(\cU) \cup \gamma_\lo(\cV) \\
\gamma_*(\cA \cdot \cB) & \supseteq \gamma_*(\cA) \cdot \gamma_*(\cB) \\
\gamma_\lo(\cA \cdot \cV) & \supseteq \gamma_*(\cA) \cdot \gamma_\lo(\cV).
\end{aligned}
\]
\item The concatenation operations are distributive over joins, \ie,
\[
\begin{aligned}
(\cA \sqcup \cB ) \cdot \cV & = \cA \cdot \cV \sqcup \cB \cdot \cV \\
\cA \cdot (\cU \sqcup \cV) & = \cA \cdot \cU \sqcup \cA \cdot \cV.
\end{aligned}
\]
\item We have $\alpha_\lo(\gamma_*(\cA)^\omega) = \cA^\omega = \cA \cdot \cA^\omega$.
\item Every B\"uchi abstraction $(\cM_*,\cM_\lo)$ is a Wilke Algebra, \ie, for all $\cA,\cB \in \cM_*$ we have
\[
\begin{aligned}
\cA \cdot (\cB \cdot \cA)^\omega & = (\cA \cdot \cB)^\omega \\
(\cA^n)^\omega & = \cA^\omega \quad \text{for each } n>0 
\end{aligned}
\]
where $\cA^n$ is the product of $n$ copies of $\cA$, and every element of $\cM_\lo$ can be written as $\cA \cdot \cB^\omega$ for some $\cA,\cB \in \cM_*$.
\end{enumerate}
\end{proposition}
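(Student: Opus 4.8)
The plan is to reduce all five clauses to the concrete B\"uchi algebra $(\cP(\Sigma^*),\cP(\Sigma^\lo))$, exploiting two generic features of the Galois insertions. First, from $\alpha(\gamma(a))=a$ the maps $\alpha_*,\alpha_\lo$ are \emph{surjective}, so every abstract element can be written as $\alpha$ of a language (take its $\gamma$-image); and the insertion is a genuine adjunction ($\alpha(c)\sqsubseteq a$ iff $c\subseteq\gamma(a)$), so the $\alpha$'s are lower adjoints. Second, the compatibility equations let me commute $\alpha_*,\alpha_\lo$ past $\cdot$ and $(-)^\omega$. Clause~(1) is then the standard fact that a lower adjoint preserves joins: monotonicity gives $\alpha(A)\sqcup\alpha(B)\sqsubseteq\alpha(A\cup B)$, and the adjunction together with $A,B\subseteq\gamma(\alpha(A)\sqcup\alpha(B))$ gives the converse. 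Clause~(2) is immediate from monotonicity of $\gamma$ and $c\subseteq\gamma(\alpha(c))$; for instance $\gamma_*(\cA)\cdot\gamma_*(\cB)\subseteq\gamma_*(\alpha_*(\gamma_*(\cA)\cdot\gamma_*(\cB)))=\gamma_*(\cA\cdot\cB)$, the last step using compatibility and $\alpha_*\gamma_*=\mathrm{id}$, and the remaining cases are identical.

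For clause~(3) I would write $\cA=\alpha_*(A),\cB=\alpha_*(B),\cV=\alpha(V)$ by surjectivity; then clause~(1) and compatibility turn $(\cA\sqcup\cB)\cdot\cV$ into $\alpha((A\cup B)\cdot V)$, and since concatenation distributes over union in $\cP(\Sigma^*)$ and $\cP(\Sigma^\lo)$, pushing $\alpha$ back out recovers $\cA\cdot\cV\sqcup\cB\cdot\cV$; the right-hand identity is symmetric. For clause~(4), the $\omega$-compatibility and $\alpha_*\gamma_*=\mathrm{id}$ give $\alpha_\lo(\gamma_*(\cA)^\omega)=\alpha_*(\gamma_*(\cA))^\omega=\cA^\omega$; and writing $A=\gamma_*(\cA)$ so that $\cA^\omega=\alpha_\lo(A^\omega)$, the unfolding law follows from the concrete identity $A\cdot A^\omega=A^\omega$ via $\cA\cdot\cA^\omega=\alpha_*(A)\cdot\alpha_\lo(A^\omega)=\alpha_\lo(A\cdot A^\omega)=\alpha_\lo(A^\omega)=\cA^\omega$.

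The two Wilke identities in clause~(5) are again obtained by lifting. The regroupings $A\cdot(B\cdot A)^\omega=(A\cdot B)^\omega$ and $(A^n)^\omega=A^\omega$ (for $n\ge1$) are elementary facts about infinite concatenation in $\cP(\Sigma^\lo)$; writing $\cA=\alpha_*(A),\cB=\alpha_*(B)$ and commuting $\alpha$ past $\cdot$ and $(-)^\omega$ turns each side of the abstract identity into the $\alpha_\lo$-image of the corresponding side of the concrete one, so they agree. The auxiliary fact $\cA^n=\alpha_*(A^n)$ follows by iterating the concatenation-compatibility.

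The last assertion of clause~(5) --- that every element of $\cM_\lo$ equals $\cA\cdot\cB^\omega$ --- is the crux, and here lifting alone does not work, since a generic $\omega$-language is not of this form. The plan is to use finiteness of $\cM_*$: the map $h\colon\Sigma^*\to\cM_*$, $h(u)=\alpha_*(\{u\})$, is a homomorphism into a finite monoid with unit $\empFinEff$. Given $\cW\in\cM_\lo$, surjectivity and join preservation give $\cW=\bigsqcup_{w\in\gamma_\lo(\cW)}\alpha_\lo(\{w\})$; a finite word contributes exactly $h(w)\cdot\empFinEff^\omega$, and for an infinite $w=a_0a_1\cdots$ Ramsey's theorem applied to the colouring $\{i<j\}\mapsto h(a_i\cdots a_{j-1})$ produces a linked-pair factorization $w=u\,v_1v_2\cdots$ with $h(v_k)=e$ idempotent and $h(u)\,e=h(u)$, exhibiting $\alpha_\lo(\{w\})$ below the product $h(u)\cdot e^\omega$. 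The main obstacle is to upgrade these per-word bounds to a single \emph{exact} product for $\cW$: one must show that the finitely many linked-pair products associated with $\cW$ recombine to $\cW$ rather than strictly over-approximating it, and that the resulting finite join collapses to one $\cA\cdot\cB^\omega$. This is where the finiteness of $\cM_\lo$ and the structure theory of finite $\omega$-semigroups do the real work; I expect it to be the longest and most delicate step, with the automata-theoretic instance of the abstraction (where $\cM_\lo$ consists of linked pairs by construction) serving as the guiding intuition for why a single product suffices.
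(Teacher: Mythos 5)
Your handling of items (1)--(4) and of the two Wilke identities in item (5) is correct and essentially forced: the conditions $\alpha_*\circ\gamma_*=\mathrm{id}$ and $\gamma_*\circ\alpha_*\sqsupseteq\mathrm{id}$ (and likewise for the $\lo$ pair) do make the abstraction maps surjective lower adjoints, so join-preservation is automatic, and the three compatibility equations let you transport each identity from $(\cP(\Sigma^*),\cP(\Sigma^\lo))$, where all of them are elementary. The paper states this proposition without proof, so there is nothing to compare against, but for these parts your argument is the natural and correct one.

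The step you flag as ``the longest and most delicate'' --- that every element of $\cM_\lo$ equals $\cA\cdot\cB^\omega$ --- is not merely delicate: it does not follow from the paper's axioms, and your Ramsey strategy cannot be completed, because the claim already fails for the paper's own Example~\ref{ex:abstr}. There, $\{\varepsilon\}\cup a^\omega=\alpha_\lo(\{\varepsilon\})\sqcup\alpha_\lo(\{aaa\cdots\})$ is a legitimate element of $\cM_\lo$ (joins are unions and $\alpha_\lo$ preserves them), yet the only possible values of $\cB^\omega$ for $\cB\in\cM_*$ are $\emptyset$, $\{\varepsilon\}$, $a^\omega$ and $a^*\cup a^\omega$, and left-multiplying each of these by each of the four elements of $\cM_*$ produces every element of $\cM_\lo$ \emph{except} $\{\varepsilon\}\cup a^\omega$. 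This is precisely the obstruction you anticipated: the Ramsey factorization gives only per-word upper bounds $\alpha_\lo(\{w\})\sqsubseteq h(u)\cdot e^\omega$, and even when each word's abstraction is exactly a product, $\cM_\lo$ need not be closed under collapsing the resulting finite join of products into a single product --- here $\{\varepsilon\}\cup a^\omega$ is the join of $\empFinEff\cdot\empFinEff^\omega$ and $\empFinEff\cdot\alpha_*(\{a\})^\omega$ but is itself no product. The generation condition holds for the Hofmann--Chen construction by design, but for a general B\"uchi abstraction as defined it would have to be imposed as an additional axiom rather than derived; you should not spend further effort trying to prove it from the stated definition.
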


Note that, differing from finite iteration, here we do not define (or require) $\omega$-iteration to be the greatest fixed point. We have only $\cA^\omega \sqsubseteq \gfp(\lambda X.\cA \cdot X)$ in general, because $\cA^\omega$ is a fixed point of $\lambda X.\cA \cdot X$ by Proposition~\ref{prop:abs:properties}(4). In the following example, the $\omega$-iteration is strictly smaller than the greatest fixed point.

\begin{example}[Büchi Abstraction]
  \label{ex:abstr}
  Let the alphabet be $\Sigma = \{a\}$.
  We give a concrete example for a B\"uchi abstraction that distinguishes empty from non-empty words:
  \[
  \begin{aligned}
  \cM_* & = \{ \emptyset, \{\varepsilon\}, a^+, a^*\} \\
  \cM_\lo & = \{ \emptyset, \{\varepsilon\}, a^+, a^*, a^\omega, \{\varepsilon\} \cup a^\omega,  a^+ \cup a^\omega, a^* \cup a^\omega \}.
  \end{aligned}
  \]
  For both, $\sqcup$ and $\sqsubseteq$ are set union and inclusion. All other operations are given by the evident operations on languages.
  The abstraction function~$\alpha_*$ is determined by
  $\alpha_*(\{\varepsilon\}) = \{\varepsilon\}$ and
  $\alpha_*(X) = a^+$ for any $X\subseteq a^+$.
  All other values are determined by preservation of least element and joins.
  Similarly, $\alpha_\lo$ is determined by
  $\alpha_\lo(\{\varepsilon\}) = \{\varepsilon\}$,
  $\alpha_\lo(X) = a^+$ for any $X\subseteq a^+$ and
  $\alpha_\lo(a^\omega) = a^\omega$.
  The concretion functions $\gamma_*$ and $\gamma_\lo$ are both the identity.

  We note that the greatest solution of the equation $X = a^+ \cdot X$ over
  $\cM_\lo$ is $a^+ \cup a^\omega$ just by direct calculation.
  When calculated in $\cP(\Sigma^\lo)$, the greatest solution is $a^\omega$, which is the $\omega$-iteration of $a^+$ over $\cM_\lo$.
  \qed
\end{example}

As will become clear, a B\"uchi abstraction is sufficient for presenting the type system and the type inference algorithm. Hofmann and Chen~\cite{HC:buechi,HC:abstract} construct a B\"uchi abstraction from an \emph{extended B\"uchi automaton}~$\fA$ whose language is the union of the language of~$\fA$ when understood as a traditional nondeterministic finite automaton and the language of $\fA$ when understood as a traditional B\"uchi automaton. A programming guideline is simply a set of finite and infinite traces, and thus can be formalized by an extended B\"uchi automaton. Moreover, the B\"uchi abstraction of Hofmann and Chen is faithful with respect to acceptance by the automaton, which makes it suitable for representing effects in our type system. However, the development of the system (Section~\ref{sec:buechi:system}) and the type inference algorithm (Section~\ref{sec:type:inference}) are \emph{independent} of the construction of the B\"uchi abstraction. Readers should be able to access them without probing into the construction of Hofmann and Chen.

Similarly to the development in the previous section, we work with formal expressions $\cM_*\langle\Reg\rangle$ and $\cM_*\langle\Sig\rangle$ in the type system. We define the order $\sqsubseteq$ and the operations of concatenation $(\cdot)$ and join $(\sqcup)$ on expressions in the same way as in Definition~\ref{def:formal:expr}. To relate to the previous system, we extend the abstraction and concretization functions to expressions: Let $K$ be either $\Reg$ or $\Sig$. Given $\cT \in \cM_*\langle K \rangle$, we concretize it to $\gamma(\cT) \in \cP(\Sigma^*)\langle K \rangle$ by defining $\gamma(\cT)(x) = \gamma_*(\cT(x))$. Similarly, we abstract $T \in \cP(\Sigma^*)\langle K \rangle$ to $\alpha(T) \in \cM_*\langle K \rangle$. By the definition and properties of B\"uchi abstraction, we have:
\[
\begin{aligned}
\alpha(\gamma(\cT)) & = \cT &&&
\gamma(\alpha(T)) & \supseteq T \\
\alpha(T \cup T') & = \alpha(T) \sqcup \alpha(T') &&&
\gamma(\cT \sqcup \cT') & \supseteq \gamma(\cT) \cup \gamma(\cT') \\
\alpha(U \cdot T) & = \alpha_*(U) \cdot \alpha(T) &&&
\gamma(\cU \cdot \cT) & \supseteq \gamma_*(\cU) \cdot \gamma(\cT).
\end{aligned}
\]

\subsection{B\"uchi Effect Type System}
\label{sec:buechi:system}

In this section, we assume a programming guideline given by the language $\fA$ and a B\"uchi abstraction $(\cM_*,\cM_\lo)$ which is faithful w.r.t.\ $\fA$, \ie, $\gamma_\lo(\alpha_\lo(\fA)) = \fA$.

We now present an algorithmic type system with effects being elements of $\cM_*$ and $\cM_\lo$. The development is similar to the one in the previous section. We remove the subtyping rule and replace sets of traces by their abstractions in the type system. Thanks to the finite abstraction, we can compute greatest solutions to capture the possible non-terminating traces of a program. However, the abstraction may not preserve greatest fixed points in general. To make the analysis more precise, we concretize the abstract call expressions when defining infinitary effect typing. And we provide a simple algorithm to compute the infinitary effects that abstract the greatest fixed point of the concretized expressions.

In a \emph{B\"uchi class table} $(F,\Ma)$, the field typing $F : \Cls \times \Reg \times \Fld \rightharpoonup \Regs$ is the same as the one in the previous setting, while the method typing $\Ma : \Sig \to \cM_*\langle\Reg\rangle \times \cM_*\langle\Sig\rangle$ gives effect expressions with abstract sets of traces. We require $(F,\Ma)$ to be well-formed as in the previous setting.

A \emph{B\"uchi typing judgment} takes the form $\Gamma \vdashb e : \cT \calls \cS$ where $\cT : \cM_*\langle\Reg\rangle$ lists the possible regions and terminating effects of $e$ and $\cS: \cM_*\langle\Sig\rangle$ specifies the information about the calls that $e$ may make as in the previous system, except that effects are given by the B\"uchi abstraction. The typing rules are adapted correspondingly: Any singleton $\{a\}$ becomes $\alpha_*(\{a\})$ and the operations $\cup$ and $\cdot$ on languages are replaced by those from the B\"uchi algebra. The rules are listed in Figure~\ref{fig:buechi:typing}.
An FJ program is \emph{well-typed} w.r.t.\ $(F,\Ma)$ if for each $(C,r,m,\bar{s}) \in \Sig$ with $C \in \Cls(r)$ and $\Ma(C,r,m,\bar{s}) = (\cT,\cS)$ and $\mtable(C,m)=(\bar{x},e)$, the judgment $\this \of r,\, \bar{x} \of \bar{s} \vdashb e : \cT \calls \cS$ is derivable.

\begin{figure}[h]
\[
\inferrule*[left={sub}]
{\Gamma \vdashb e : \cT \calls \cS \quad
 \cT \sqsubseteq \cT' \quad
 \cS \sqsubseteq \cS'}
{\Gamma \vdashb e : \cT' \calls \cS'}
\]
\[
\inferrule*[left={weak}]
{\Gamma \vdashb e : (r \andrel \cU) \sqcup \cT \calls \cS }
{\Gamma \vdashb e : (\Unknown \andrel \cU) \sqcup \cT \calls \cS}
\]
\[
\inferrule*[left={prim}]
{\ }
{\Gamma \vdashb \tto(a) : \Null \andrel \alpha_*(\{a\}) \calls \emptyset}
\]
\[
\inferrule*[left={var}]
{\ }
{\Gamma,\, x\of r \vdashb x : r \andrel \alpha_*(\{\varepsilon\}) \calls \emptyset}
\]
\[
\inferrule*[left={if}]
{\Gamma,\, x\of r,\, y\of s \vdashb e_1 : \cT_1 \calls \cS_1 \quad
 \Gamma,\, x\of r,\, y\of s \vdashb e_2 : \cT_2 \calls \cS_2}
{\Gamma,\, x\of r,\, y\of s \vdashb \ttifthenelse{x}{y}{e_1}{e_2} : \cT_1 \sqcup \cT_2 \calls \cS_1 \sqcup \cS_2}
\]
\[
\inferrule*[left={else}]
{\mathit{disjoint}(r, s) \\
 \Gamma,\, x\of r,\, y\of s \vdashb e_2 : \cT_2 \calls \cS_2}
{\Gamma,\, x\of r,\, y\of s \vdashb \ttifthenelse{x}{y}{e_1}{e_2} : \cT_2 \calls \cS_2}
\]
\[
\inferrule*[left={let}]
{\Gamma \vdashb e_1 : r_1 \andrel \cU_1 \mid \ldots \mid r_n \andrel \cU_n \calls \cS\\
 \Gamma,\, x\of r_i \vdashb e_2 : \cT_i \calls \cS_i}
{\textstyle
 \Gamma \vdashb \letin{x}{e_1}{e_2} : \bigsqcup^n_{i=1} \cU_i \cdot \cT_i \calls \cS \sqcup \bigsqcup^n_{i=1} \cU_i \cdot \cS_i}
\]
\[
\inferrule*[left={null}]
{\ }
{\Gamma  \vdashb \ttnull : \Null \andrel \alpha_*(\{\varepsilon\}) \calls \emptyset}
\]
\[
\inferrule*[left={new}]
{\ }
{\Gamma \vdashb \new{\ell}{C} : \Createdat{\ell} \andrel \alpha_*(\{\varepsilon\}) \calls \emptyset}
\]
\[
\inferrule*[left={cast}]
{\Gamma \vdashb e : \cT \calls \cS}
{\Gamma \vdashb \cast{D}{e} : \cT \calls \cS}
\]
\[
\inferrule*[left={get}]
{\textstyle \cT = \bigsqcup \{ s \andrel \alpha_*(\{\varepsilon\}) \mid s \in F(C,r,f) \}}
{\Gamma,\, x\of r \vdashb x^C.f : \cT \calls \emptyset}
\]
\[
\inferrule*[left={set}]
{s \in F(C,r,f)}
{\Gamma,\, x\of r,\, y\of s \vdashb x^C.f := y : s \andrel \alpha_*(\{\varepsilon\}) \calls \emptyset}
\]
\[
\inferrule*[left={call}]
{(\cT,\_) = \Ma(C,r,m,\bar{s})}
{\Gamma,\, x\of r ,\, \bar{y} \of \bar{s} \vdashb x^C.m(\bar{y}) : \cT \calls (C,r,m,\bar{s})}
\]
\caption{Region-based B\"uchi effect type system}
\label{fig:buechi:typing}
\end{figure}

Given a B\"uchi method typing $\Ma$, we obtain a method typing $\Ma^\gamma$ with concretized effects by defining for $\delta \in \dom(\Ma)$ with $\Ma(\delta) = (\cT,\cS)$:
\[
\Ma^\gamma (\delta) = (\gamma(\cT),\gamma(\cS)).
\]
We relate the B\"uchi type system to the type system introduced in Section~\ref{sec:type:system:sets} as follows.

\begin{lemma}
\label{lm:derivable}
If\/ $\Gamma \vdashb e : \cT \calls \cS$ is derivable w.r.t.\ the B\"uchi class table $(F,\Ma)$, then so is $\Gamma \vdash e : \gamma(\cT) \calls \gamma(\cS)$ w.r.t.\ the corresponding class table $(F,\Ma^\gamma)$.
\end{lemma}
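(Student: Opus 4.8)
The plan is to argue by structural induction on the derivation of the B\"uchi judgment $\Gamma \vdashb e : \cT \calls \cS$, doing a case analysis on the last rule and, in each case, assembling a set-based derivation of $\Gamma \vdash e : \gamma(\cT) \calls \gamma(\cS)$ with respect to $(F,\Ma^\gamma)$. The entire argument rests on three facts about concretization recorded at the end of Section~\ref{sec:buechi:abstraction}: that $\gamma$ is monotone; that it \emph{laxly} preserves the expression operations, i.e.\ $\gamma(\cA\sqcup\cB)\supseteq\gamma(\cA)\cup\gamma(\cB)$ and $\gamma(\cU\cdot\cT)\supseteq\gamma_*(\cU)\cdot\gamma(\cT)$ (Proposition~\ref{prop:abs:properties}(2)); and that the Galois insertion gives $\gamma_*(\alpha_*(\{w\}))\supseteq\{w\}$ for $w\in\{\varepsilon,a\}$. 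The decisive observation is that \emph{every} one of these inequalities points in the direction $\supseteq$, so the set-based \textsc{sub} rule can always absorb the slack between what a set-based rule literally produces and the $\gamma$-image we aim for.

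I expect a uniform pattern for almost all cases. For the axiom rules \textsc{prim}, \textsc{null}, \textsc{new}, \textsc{var} and \textsc{set}, the set-based counterpart yields the literal effect $\{a\}$ or $\{\varepsilon\}$, and one application of \textsc{sub} weakens it to $\gamma_*(\alpha_*(\{a\}))$ resp.\ $\gamma_*(\alpha_*(\{\varepsilon\}))$ via the Galois inequality. For the composite rules \textsc{if}, \textsc{let} and \textsc{get}, I apply the induction hypothesis to the premises, apply the matching set-based rule, and check that the outcome lies below the target: e.g.\ in \textsc{let} the set-based rule produces $\bigcup_i \gamma_*(\cU_i)\cdot\gamma(\cT_i)$, which is contained in $\gamma(\bigsqcup_i \cU_i\cdot\cT_i)$ by lax preservation of joins and concatenations, so \textsc{sub} finishes; \textsc{get} and \textsc{if} use lax preservation of joins alone. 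The rules \textsc{else} and \textsc{cast} transfer verbatim, and the B\"uchi \textsc{sub} case follows from monotonicity of $\gamma$ together with set-based \textsc{sub}. For \textsc{call} the point is that $\Ma^\gamma(C,r,m,\bar s)=(\gamma(\cT),\gamma(\cS))$ by definition, so the terminating component matches $\gamma(\cT)$ exactly; only the call expression, literally $(C,r,m,\bar s)$ with coefficient $\{\varepsilon\}$, must be weakened to coefficient $\gamma_*(\alpha_*(\{\varepsilon\}))$ by \textsc{sub}.

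The one case needing genuine care, and the step I expect to be the main obstacle, is \textsc{weak}. Here the B\"uchi rule moves the effect $\cU$ of a region $r$ to $\Unknown$, passing from $(r\andrel\cU)\sqcup\cT$ to $(\Unknown\andrel\cU)\sqcup\cT$, and the difficulty is that $\gamma$ need \emph{not} preserve joins, so $\gamma((\Unknown\andrel\cU)\sqcup\cT)$ is not obtained by concretizing the $\Unknown$-slot componentwise. My plan is to peel off the \emph{entire} concretized $r$-slot of $\gamma((r\andrel\cU)\sqcup\cT)$ and push it to $\Unknown$ using the set-based \textsc{weak} rule; the resulting $\Unknown$-slot is then a union $\gamma_*(\cU)\cup\gamma_*(\cT(\Unknown))$ of concretized pieces, which by $\gamma_*(\cA\sqcup\cB)\supseteq\gamma_*(\cA)\cup\gamma_*(\cB)$ is contained in the single concretized join $\gamma_*(\cU\sqcup\cT(\Unknown))$ demanded by the target, so a final \textsc{sub} closes the gap. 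I will read the rule with $r$ distinct from $\Unknown$, as the $\mid$-notation for effect expressions suggests, which is exactly what makes the peeling clean; this is where one must verify that the lax direction of $\gamma$ is the favorable one, and it is precisely this check that constitutes the (ultimately routine) crux of the proof.
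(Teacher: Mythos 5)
The paper states Lemma~\ref{lm:derivable} without proof, so there is nothing to compare against line by line; your induction on the Büchi derivation, discharging each case by the matching set-based rule followed by \textsc{sub} to absorb the $\supseteq$-slack from the Galois inequality $\gamma(\alpha(c))\supseteq c$ and from the lax preservation laws $\gamma(\cT\sqcup\cT')\supseteq\gamma(\cT)\cup\gamma(\cT')$ and $\gamma(\cU\cdot\cT)\supseteq\gamma_*(\cU)\cdot\gamma(\cT)$, is the natural (and essentially forced) argument, and all the routine cases check out, including \textsc{call}, where $\Ma^\gamma(\delta)=(\gamma(\cT),\gamma(\cS))$ makes the terminating component exact and only the coefficient $\{\varepsilon\}$ of the signature needs widening to $\gamma_*(\alpha_*(\{\varepsilon\}))$.

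You are also right that \textsc{weak} is the one genuinely delicate case, and your resolution is correct \emph{under the reading you announce}, but that reading is doing real work and deserves to be stated as a normalization of the rule instance rather than a notational preference: you need the instance to be such that $\cU$ is the \emph{entire} $r$-component of the premise and $\cT$ carries no residual $r$-entry. Then peeling the whole concretized $r$-slot yields an $\Unknown$-slot $\gamma_*(\cU)\cup\gamma_*(\cT(\Unknown))\subseteq\gamma_*(\cU\sqcup\cT(\Unknown))$ and \textsc{sub} closes the gap, exactly as you say. If instead one admits instances where $\cT$ retains an $r$-component $\cT(r)$ not dominated by $\cU$ (the rule is written with $\sqcup$, so this is literally permitted), the target becomes $r\mapsto\gamma_*(\cT(r))$, $\Unknown\mapsto\gamma_*(\cU\sqcup\cT(\Unknown))$, and one would have to split the single concrete set $\gamma_*(\cU\sqcup\cT(r))$ into a part bounded by $\gamma_*(\cT(r))$ and a part bounded by $\gamma_*(\cU)$; since $\gamma_*$ need not preserve joins, $\gamma_*(\cU\sqcup\cT(r))$ may strictly contain $\gamma_*(\cU)\cup\gamma_*(\cT(r))$ and the leftover words have nowhere to go. So either state explicitly that every \textsc{weak} instance can be taken in the normalized form (which suffices, since any derivable judgment can be weakened slot by slot in that form), or restrict the rule accordingly. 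With that caveat made explicit, the proof is complete.
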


The following is a direct consequence of Lemma~\ref{lm:derivable}.

\begin{theorem}
\label{thm:buechi:soundness}
If a program~$P$ is well-typed w.r.t.\ a B\"uchi class table $(F,\Ma)$, then it is well-typed w.r.t.\ the corresponding class table $(F,\Ma^\gamma)$.
\end{theorem}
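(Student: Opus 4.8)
The plan is to prove the theorem by unfolding the two notions of well-typedness and making a single appeal to Lemma~\ref{lm:derivable}. First I would fix an arbitrary method signature $(C,r,m,\bar{s}) \in \Sig$ with $C \in \Cls(r)$, which is exactly the class of signatures over which the definition of well-typedness w.r.t.\ $(F,\Ma^\gamma)$ quantifies. Writing $(\cT,\cS) = \Ma(C,r,m,\bar{s})$ and $\mtable(C,m) = (\bar{x},e)$, the definition of $\Ma^\gamma$ gives $\Ma^\gamma(C,r,m,\bar{s}) = (\gamma(\cT),\gamma(\cS))$. Hence the judgment demanded by well-typedness w.r.t.\ $(F,\Ma^\gamma)$ at this signature is precisely $\this \of r,\, \bar{x} \of \bar{s} \vdash e : \gamma(\cT) \calls \gamma(\cS)$, to be derived with respect to the concretized class table $(F,\Ma^\gamma)$.

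Next I would invoke the hypothesis that $P$ is well-typed w.r.t.\ $(F,\Ma)$. Since $C \in \Cls(r)$, this hypothesis supplies a derivation of the B\"uchi judgment $\this \of r,\, \bar{x} \of \bar{s} \vdashb e : \cT \calls \cS$ with respect to $(F,\Ma)$. Lemma~\ref{lm:derivable} then transports this into a derivation of $\this \of r,\, \bar{x} \of \bar{s} \vdash e : \gamma(\cT) \calls \gamma(\cS)$ with respect to $(F,\Ma^\gamma)$, which is exactly the obligation identified above. As the signature $(C,r,m,\bar{s})$ was arbitrary, this establishes well-typedness of $P$ w.r.t.\ $(F,\Ma^\gamma)$.

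Two bookkeeping observations complete the argument and confirm that nothing further need be checked. The field-typing component $F$ is literally identical in the two class tables, so no field-typing or well-formedness condition involving $F$ has to be re-examined. Moreover, the two definitions of well-typedness range over the same set of signatures, namely those with $C \in \Cls(r)$, so the entry-wise concretization $\Ma \mapsto \Ma^\gamma$ turns each B\"uchi obligation into exactly the corresponding obligation of the set-based system, with no obligation gained or lost. Consequently there is no genuine obstacle at the level of this theorem: the entire content is carried by Lemma~\ref{lm:derivable}, whose proof is the real work and proceeds by induction on the B\"uchi typing derivation, treating each rule of Figure~\ref{fig:buechi:typing} in turn and using the compatibility of $\gamma$ with joins and concatenation recorded just before Section~\ref{sec:buechi:system} together with the \textsc{sub} rule of the set-based system.
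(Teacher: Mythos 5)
Your proof is correct and follows exactly the route the paper takes: the paper's entire proof is the observation that the theorem is ``a direct consequence of Lemma~\ref{lm:derivable}'', and your write-up simply makes explicit the unfolding of the two well-typedness definitions and the per-signature application of that lemma. The bookkeeping remarks (identical $F$, identical set of signatures with $C \in \Cls(r)$) are exactly the points that justify the word ``direct''.
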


Moreover, we can compute the infinitary effect typing over the B\"uchi abstraction.

\begin{theorem}
\label{thm:buechi:infinitary}
Given a B\"uchi method typing $\Ma$, we can compute directly an assignment $\eta_\fA : \Sig \to \cM_\lo$ that abstracts the infinitary effect typing $\eta_\gamma : \Sig \to \cP(\Sigma^\lo)$ for $\Ma^\gamma$, \ie, for all $\delta \in \Sig$ we have $\eta_\fA(\delta) = \alpha_\lo(\eta_\gamma(\delta))$ where $\eta_\gamma$ is the greatest solution of $\{ \delta = \gamma(\cS_\delta) \mid \delta \in \Sig \}$ with $\cS_\delta$ the call expression in $\Ma(\delta)$. We call such an assignment $\eta_\fA$ a \emph{B\"uchi infinitary effect typing}.
\end{theorem}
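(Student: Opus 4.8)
The plan is to express the greatest solution $\eta_\gamma$ of the concretized system $\{\delta = \gamma(\cS_\delta)\mid\delta\in\Sig\}$ in a \emph{closed form} that uses only the operations $\cup$, $\cdot$, $(-)^*$ and $(-)^\omega$ on languages, and then to replay that computation inside the finite B\"uchi algebra $(\cM_*,\cM_\lo)$. This is exactly what the compatibility of the abstraction buys us: $\alpha_*$ and $\alpha_\lo$ commute with these four operations (by the definition of B\"uchi abstraction and the identity $\alpha_*(A^*)=\alpha_*(A)^*$) and preserve joins (Proposition~\ref{prop:abs:properties}(1)). We must \emph{not} try to compute the greatest fixed point directly over $\cM_\lo$: as Example~\ref{ex:abstr} shows, $\alpha_\lo$ does not preserve greatest fixed points, and the value we are after is the $\omega$-iteration $\alpha_*(A)^\omega$ rather than $\gfp(\lambda X.\,\alpha_*(A)\cdot X)$. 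Producing the closed form is precisely what lets us insert $(-)^\omega$ in the correct places.

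To obtain the closed form I would solve the linear system by Gaussian elimination, using the $\omega$-variant of Arden's rule: the greatest solution over $\cP(\Sigma^\lo)$ of a single equation $X=A\cdot X\cup C$ is $A^*\cdot C\cup A^\omega$ when $\varepsilon\notin A$, and is $\Sigma^\lo$ when $\varepsilon\in A$ (a routine calculation, cf.\ the discussion of $U^\omega$ and greatest fixed points in Section~\ref{sec:lang}). Since $\Sig$ is finite, eliminating one variable at a time turns the system into an affine one whose variable-coefficients are always finite-word languages ($\subseteq\Sigma^*$) and whose accumulated constants live in $\cP(\Sigma^\lo)$; once all variables are gone, each $\eta_\gamma(\delta)$ is the remaining constant, a language built from the initial coefficients $\gamma_*(\cA^\delta_{\delta'})$ by $\cup$, $\cdot$, $(-)^*$, $(-)^\omega$ and the single constant $\Sigma^\lo$.

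The algorithm for $\eta_\fA$ then runs the very same elimination inside $(\cM_*,\cM_\lo)$: variable-coefficients are kept in $\cM_*$, accumulated constants in $\cM_\lo$, and the steps use $\sqcup$, $\cdot$, $\cA^*=\lfp(\lambda X.\,\alpha_*(\{\varepsilon\})\sqcup\cA\cdot X)$, $(-)^\omega$ and the top element $\top_{\cM_\lo}$. Correctness is an induction over the elimination steps maintaining the invariant that each abstract coefficient equals $\alpha_*$ of its concrete counterpart and each abstract constant equals $\alpha_\lo$ of its concrete counterpart. The base case uses $\alpha_*(\gamma_*(\cA))=\cA$; the $\varepsilon$-free inductive step uses the commutation equations $\alpha_*(A\cdot B)=\alpha_*(A)\cdot\alpha_*(B)$, $\alpha_\lo(A\cdot V)=\alpha_*(A)\cdot\alpha_\lo(V)$ and $\alpha_\lo(A^\omega)=\alpha_*(A)^\omega$ together with join preservation; and the $\varepsilon$ step uses the identity $\alpha_\lo(\Sigma^\lo)=\top_{\cM_\lo}$, which holds because $\alpha_\lo$ preserves joins and is surjective. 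Finiteness of $\cM$ makes every step effective and guarantees termination, and the final constants yield $\eta_\fA(\delta)=\alpha_\lo(\eta_\gamma(\delta))$.

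The hard part will be the treatment of $\varepsilon$. For the result to be \emph{exact} (equal to $\alpha_\lo\circ\eta_\gamma$ and not merely an over-approximation of it) the case split in Arden's rule must agree between the concrete and the abstract run. Since the abstract element $\cA$ need not determine whether $\varepsilon\in L$ for an intermediate coefficient $L$, I would carry a nullability bit alongside each coefficient and update it compositionally ($\varepsilon\in A\cup B$ iff $\varepsilon\in A$ or $\varepsilon\in B$; $\varepsilon\in A\cdot B$ iff both; $\varepsilon\in A^*$ always; and $\varepsilon\in\gamma_*(\cA)$ iff $\alpha_*(\{\varepsilon\})\sqsubseteq\cA$ for a base coefficient). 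This keeps the $\varepsilon$-test exact, so that the collapse to $\Sigma^\lo$ (abstracted to $\top_{\cM_\lo}$) happens in the abstract run exactly when it happens concretely.
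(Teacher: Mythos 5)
Your proposal follows essentially the same route as the paper: Gaussian elimination over the equation system using the $\omega$-variant of Arden's rule, replacing each self-referential equation $\delta = \cA\cdot\delta\sqcup\mathcal{F}$ by $\cA^*\cdot\mathcal{F}\sqcup\cA^\omega$, and justifying exactness by the commutation of $\alpha_*,\alpha_\lo$ with $\cup$, $\cdot$, $(-)^*$, $(-)^\omega$ together with $\alpha\circ\gamma=\mathrm{id}$. Where you genuinely go beyond the paper's proof is the $\varepsilon$ case: the paper uniformly sets $\delta=\cA_\delta^*\cdot\mathcal{F}_\delta\sqcup\cA_\delta^\omega$ and verifies the single-equation case only under the tacit assumption that $\gamma_*(\cA)$ is $\varepsilon$-free, whereas you observe that when $\varepsilon\in\gamma_*(\cA)$ the greatest solution of $X=\gamma_*(\cA)\cdot X\cup\gamma_\lo(\cB)$ over $\cP(\Sigma^\lo)$ is $\Sigma^\lo$, so the abstract answer must be $\top_{\cM_\lo}=\alpha_\lo(\Sigma^\lo)$ rather than $\cA^*\cdot\cB\sqcup\cA^\omega$. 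This case does arise (e.g.\ a method with body $\this.m()$ yields $\delta=\alpha_*(\{\varepsilon\})\cdot\delta$, whose concretized greatest solution is $\Sigma^\lo$ while $\alpha_*(\{\varepsilon\})^\omega=\alpha_\lo(\{\varepsilon\})$), so your case split is not pedantry: it is needed for the stated equality $\eta_\fA=\alpha_\lo\circ\eta_\gamma$, and indirectly for soundness of the infinitary analysis on silently diverging recursion. Your mechanism for making the split effective is also right: for base coefficients, $\varepsilon\in\gamma_*(\cA)$ iff $\alpha_*(\{\varepsilon\})\sqsubseteq\cA$ (by the Galois insertion), and for derived coefficients nullability is tracked compositionally, so the concrete and abstract runs of the elimination branch identically. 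In short, same algorithmic skeleton and same correctness invariant as the paper, but with a repair that the paper's own argument needs.
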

\begin{proof}
Let $\cS_\delta$ be the call expressions in $\Ma(\delta)$. We solve the equation system $\{ \delta = \cS_\delta \mid \delta \in \Sig \}$ as follows: Recall that each expression has the form $\bigsqcup_{\delta \in \Delta} \cA_\delta \cdot \delta$ for some $\Delta \subseteq \Sig$. Viewing as an $\cM_\lo$-valued function, each $\cS_\delta$ can be rewritten to an equivalent form $\cA_\delta \cdot \delta \sqcup \mathcal{F}_\delta$ where the function $\mathcal{F}_\delta$ does not contain $\delta$, by the fact that the join operation is commutative. We repeat the following steps for each $\delta \in \Sig$:
\begin{enumerate}
\item Rewrite the right-hand side of the equation for $\delta$ to the form $\cA_\delta \cdot \delta \sqcup \mathcal{F}_\delta$.
\item Set $\delta = \cA_\delta^* \cdot \mathcal{F}_\delta \sqcup \cA^\omega_\delta$.
\item Substitute $\delta$ by $\cA_\delta^* \cdot \mathcal{F}_\delta \sqcup \cA^\omega_\delta$ on the right-hand side of all the other equations.
\end{enumerate}
One variable is eliminated on the right-hand side of the equations after each iteration. In the end we solve the system and obtain an assignment function $\eta_\fA : \Sig \to \cM_\lo$.

It abstracts the greatest solution $\eta_\gamma$ of $\{ \delta = \gamma(\cS_\delta) \mid \delta \in \Sig \}$ where the equations are concretized. For instance, consider a singleton equation system $\{ X = \cA \cdot X \sqcup \cB \}$. Its solution is
\[
\eta_\fA(X) = \cA^* \cdot \cB \sqcup \cA^\omega
\]
according to the above algorithm. The greatest solution of its concretization $\{ X = \gamma_*(\cA) \cdot X \cup \gamma_\lo(\cB) \}$ is $\eta_\gamma(X) = \gamma_*(\cA)^* \cdot \gamma_\lo(\cB) \cup \gamma_*(\cA)^\omega$. It is not hard to prove $\eta_\fA(X) = \alpha_\lo(\eta_\gamma(X))$ using Proposition~\ref{prop:abs:properties}. One can easily generalize the proof to systems with multiple equations. We refer the reader to \eg~\cite{fritz:fixed:point} for the characterization of simultaneous fixed points.
\end{proof}

\begin{remark}
One may expect a B\"uchi infinitary effect typing to be the greatest solution of $\{ \delta = \cS_\delta \mid \delta \in \Sig \}$ without concretization. However, the B\"uchi abstraction may not preserve greatest fixed points, which may lead to a loss in precision. Consider the example of a single method~$\texttt{f}$ with body $\tto(a); \this.\texttt{f}()$. The equation system is $\{ \delta_{\mathtt{f}} = \alpha_*(\{ a\}) \cdot \delta_{\mathtt{f}} \}$. If we solve it in the B\"uchi abstraction from Example~\ref{ex:abstr}, then we get the solution $\eta(\delta_{\mathtt{f}}) = a^+ \cup a^\omega$, as noted there. However, if we concretize this equation and solve over sets, then we get the solution $\eta_\gamma(\delta_{\mathtt{f}}) = a^\omega$. Therefore, we choose a better approximation of the infinite traces and formulate the B\"uchi infinitary effect typing as the abstraction of the greatest fixed point of the concretized call expressions.
\end{remark}

The soundness of the B\"uchi type system follows from the soundness of the previous type system (Theorem~\ref{thm:soundness:gen})
by the properties of abstraction and concretion.

\begin{corollary}[Soundness]
Suppose that a program is well-typed w.r.t.\ $(F,\Ma)$ and that $\eta_\fA$ is a Büchi infinitary effect typing for~$\Ma$.
Then, for any $(C,r,m,\bar{s}) \in \Sig$ with $\Ma(C,r,m,\bar{s}) = (\cT,\cS)$ and $\eta_\fA(C,r,m,\bar{s}) = \cV$, and for any $x\of r$ and $\bar{y}\of \bar{s}$, we have
\begin{enumerate}
\item if $x.m(\bar{y})$ evaluates to a value $v$ and generates a trace $w$, then $v$ is in region~$t$ and $w \in \gamma_*(\cU)$ for some $(t \andrel \cU) \in \cT$;
\item if $x.m(\bar{y})$ diverges and generates a trace $w$, then $w \in \gamma_\lo(\cV)$.
\end{enumerate}
\end{corollary}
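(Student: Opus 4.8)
The plan is to derive this corollary from the set-based soundness result (Corollary~\ref{thm:M:sound}) by transporting everything through the concretization maps, rather than re-running the operational argument. The two bridging results we need are already in place: Theorem~\ref{thm:buechi:soundness} lets us move from well-typedness w.r.t.\ the B\"uchi class table $(F,\Ma)$ to well-typedness w.r.t.\ the concretized class table $(F,\Ma^\gamma)$, and Theorem~\ref{thm:buechi:infinitary} tells us that the B\"uchi infinitary effect typing $\eta_\fA$ is the $\alpha_\lo$-abstraction of the set-based infinitary effect typing $\eta_\gamma$ for $\Ma^\gamma$.

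First I would fix the data: given that the program is well-typed w.r.t.\ $(F,\Ma)$, Theorem~\ref{thm:buechi:soundness} gives well-typedness w.r.t.\ $(F,\Ma^\gamma)$. Let $\eta_\gamma$ be the infinitary effect typing for $\Ma^\gamma$, that is, the greatest solution of the concretized equation system, so that $\eta_\fA(\delta) = \alpha_\lo(\eta_\gamma(\delta))$ for every $\delta$ by Theorem~\ref{thm:buechi:infinitary}. Now I would instantiate Corollary~\ref{thm:M:sound} at the concretized class table $(F,\Ma^\gamma)$ with the infinitary effect typing $\eta_\gamma$, applied to the signature $(C,r,m,\bar{s})$. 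Writing $\Ma^\gamma(C,r,m,\bar{s}) = (\gamma(\cT),\gamma(\cS))$ and $\eta_\gamma(C,r,m,\bar{s})=V$, this yields directly: if the call terminates with value $v$ and trace $w$, then $v$ lies in some region $t$ with $w \in U'$ for some $(t \andrel U') \in \gamma(\cT)$; and if it diverges with trace $w$, then $w \in V$.

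It then remains to rephrase these two conclusions in terms of the B\"uchi data $\cT$ and $\cV = \eta_\fA(C,r,m,\bar{s})$. For part~(1), I would unfold the definition of the extended concretization on formal expressions, $\gamma(\cT)(t) = \gamma_*(\cT(t))$, so that the entries of $\gamma(\cT)$ are exactly the pairs $(t \andrel \gamma_*(\cU))$ with $(t \andrel \cU) \in \cT$; the conclusion $w \in U' = \gamma_*(\cU)$ for some $(t \andrel \cU) \in \cT$ is then literally the statement we want. For part~(2), the key point is that $\cV = \eta_\fA(C,r,m,\bar{s}) = \alpha_\lo(V)$, and the Galois-insertion inequality $\gamma_\lo(\alpha_\lo(V)) \sqsupseteq V$ gives $\gamma_\lo(\cV) \supseteq V$, whence $w \in V \subseteq \gamma_\lo(\cV)$.

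I do not expect a genuine obstacle here, since the corollary is essentially a transport of Corollary~\ref{thm:M:sound} along the abstraction maps. The only points requiring care are purely bookkeeping: checking that the concretization $\gamma$ on effect expressions distributes over the region-indexed entries exactly as claimed, so that part~(1) matches on the nose, and making sure the direction of the Galois inequality is the one that helps for part~(2) --- it is the over-approximation direction $\gamma_\lo(\alpha_\lo(V)) \supseteq V$, which is precisely what soundness needs.
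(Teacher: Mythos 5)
Your proposal is correct and follows exactly the route the paper intends: the paper derives this corollary from the set-based soundness result via Theorem~\ref{thm:buechi:soundness} (transport of well-typedness to $(F,\Ma^\gamma)$) and Theorem~\ref{thm:buechi:infinitary} (the identity $\eta_\fA = \alpha_\lo \circ \eta_\gamma$), finishing with the Galois-insertion inequality $\gamma_\lo(\alpha_\lo(V)) \supseteq V$ and the pointwise definition of $\gamma$ on effect expressions. Your bookkeeping of both translation steps, including the direction of the over-approximation, matches the paper's argument.
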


To certify that all the traces of a given method are allowed by the language~$\fA$, we only need to check if all its effects stored in the typings $\Ma$ and $\eta_\fA$ are below the abstraction of~$\fA$. 

Let us spell out directly what this theorem means if we use the B\"uchi abstraction constructed from an
extended B\"uchi automaton using the construction of Hofmann and Chen~\cite{HC:buechi}.
In this case, $\fA$ is the language of the automaton. The Büchi abstraction of the automaton is
such that the automaton accepts a language $V\subseteq\Sigma^\lo$
if and only if $\alpha_\lo(V)$ is below the abstraction of~$\fA$. This follows from
\cite[Lemma 3(c)]{HC:buechi}.
Since we have defined a Büchi infinitary effect typing~$\eta_\fA$ as the abstraction of an 
infinitary effect typing~$\eta_\gamma$, this means that checking that $\eta_\fA(\delta)$ is below the abstraction of~$\fA$
amounts to checking that the extended Büchi automaton accepts all words in~$\eta_\gamma(\delta)$.

Come back to the example of a server given in the Introduction. Let $\fA$ be the
guideline consisting of traces where each event \textit{access} is immediately
preceded by \textit{authcheck}. It can be represented by a B\"uchi automaton.
Based on the automaton, we employ the construction of Hofmann and
Chen~\cite{HC:abstract} to get a B\"uchi abstraction. Checking whether the
non-terminating effect of \texttt{serve} is below~$\fA$ amounts to checking
whether it is accepted by the Büchi automaton. In this example, this will be the
case and \texttt{serve} is verified to adhere to the guideline.

\subsection{Type Inference Algorithm}
\label{sec:type:inference}

Given a program $P$ that is well-typed w.r.t.\ a given standard FJ class table, we extend the inference algorithm of~\cite[Appendix~F]{EHZ:enforcing} to construct a B\"uchi class table $(F,M_\fA)$ with respect to which $P$ is well-typed. The pseudo-code of our algorithm $\alg$ and its auxiliary procedures \ls{init} and \ls{typeff} is given in Figure~\ref{fig:alg}.

\begin{figure}[h]
\phantom{abc}
\begin{lstlisting}[language=myML]
proc $\alg(P)$
  $(F, \Ma)$ := init$(P)$
  do
    $(F', M')$ := $(F, \Ma)$ 
    foreach $(C,r,m,\bar{s}) \in \dom(\Ma)$ with $C \in \Cls(r)$
                                   and $(C,m)\in\dom(\mtable)$
      $(\bar{x},e)$ := $\mtable(C,m)$
      $\Gamma$ := $[\this \mapsto r]\cup[ x_i \mapsto s_i]_{i \in {\{1,\ldots,|\bar{x}|\}}}$
      $(\cT,\cS)$ := typeff$(\Gamma, e)$
      $\Ma(C,r,m,\bar{s})$ := $\Ma(C,r,m,\bar{s}) \sqcup (\cT,\cS)$
      $(F, \Ma)$ := checkClassTable$(F, \Ma)$
  until $(F',M') = (F, \Ma)$
  return $(F, \Ma)$

proc init$(P)$
  $\Cls(\Unknown)$ := $\Cls$
  $\Cls(\Null)$ := $\NullType$
  foreach $(\new{\ell}{C}) \in P$
    $\Cls(\Createdat{\ell})$ := $\Cls(\Createdat{\ell}) \cup \{ C \}$
  foreach $C \in \Cls$, $r \in \Reg$
    foreach $f \in \fields(C)$
      $F(C,r,f)$ := $\{ \Null \}$
    foreach $m \in \methods(C)$, $\bar{s} \in \Reg^{\mathrm{ar}(m)}$
      $\Ma(C,r,m,\bar{s})$ := $(\emptyset,\emptyset)$
  return $(F,\Ma)$

proc typeff$(\Gamma,e)$
  match $e$ with
  $\mid \tto(a)$ -> return $((\Null \andrel \alpha_*(\{a\})), \emptyset)$
  $\mid x$ -> return $((\Gamma(x) \andrel \alpha_*(\{\varepsilon\})), \emptyset)$
  $\mid \ttifthenelse{x}{y}{e_1}{e_2}$ ->
     if $\disjoint{\Gamma(x)}{\Gamma(y)}$ then return typeff$(\Gamma,e_2)$
     else return typeff$(\Gamma,e_1)$ $\sqcup$ typeff$(\Gamma,e_2)$
  $\mid \letin{x}{e_1}{e_2}$ ->
     $((r_1 \andrel \cU_1 \mid \ldots \mid r_n \andrel \cU_n),\cS)$ := typeff$(\Gamma,e_1)$
     foreach $i \in \{1,\ldots,n\}$
        $(\cT_i,\cS_i)$ := typeff$(\Gamma[x \mapsto r_i],e_2)$
     return $(\bigsqcup^n_{i=1} \cU_i \cdot \cT_i, \cS \sqcup \bigsqcup^n_{i=1} \cU_i \cdot \cS_i)$
  $\mid \ttnull$ -> return $((\Null \andrel \alpha_*(\{\varepsilon\})), \emptyset)$
  $\mid \new{\ell}{C}$ -> return $((\Createdat{\ell} \andrel \alpha_*(\{\varepsilon\})), \emptyset)$
  $\mid \cast{D}{e}$ -> return typeff$(\Gamma,e)$
  $\mid x^C.f$ -> return $(\bigsqcup\{ s \andrel \alpha_*(\{\varepsilon\}) | s \in F(C,\Gamma(x),f) \}, \emptyset)$
  $\mid x^C.f := y$ ->
     $F(C,\Gamma(x),f)$ := $F(C,\Gamma(x),f) \cup \{ \Gamma(y) \}$
     return $((\Gamma(y) \andrel \alpha_*(\{\varepsilon\})), \emptyset)$
  $\mid x^C.m(\bar{y})$ ->
     $(\cT,\_)$ := $\Ma(C,\Gamma(x),m,\Gamma(\bar{y}))$
     return $(\cT, (C,\Gamma(x),m,\Gamma(\bar{y})))$
\end{lstlisting}
\caption{Pseudo-code of the type inference algorithm}
\label{fig:alg}
\end{figure}

Specifically, $\alg$ starts by initializing $(F,\Ma)$ using \ls{init}: All entries in $F$ are set to be $\{\Null\}$, and those in $\Ma$ are set to be the pair $(\emptyset,\emptyset)$ of empty expressions. In words, we start with the assumption that all fields are $\itnull$ and all methods have no effects. Note that \ls{init} computes also the set $\Cls(r)$ of related classes for each region~$r$. In the definition of \ls{init}, we write $\mathrm{ar}(m)$ to denote the number of arguments of the method~$m \in \methods(C)$.

After initializing $(F,\Ma)$, $\alg$ infers the types and effects of each method body of the program, ``weakens'' the corresponding entry in~$\Ma$ if it differs from the inference result, and repeats this until no further updates of $(F,\Ma)$ are possible. The inference of types and effects is performed via the \typeff{} procedure that will be explained below. During each iteration, if the typings $F$ and $\Ma$ are updated, the procedure \ls{checkClassTable} ensures that they are well-formed in the sense of Definition~\ref{def:class:table}. For instance, if $C \preceq D$ and $F(C,r,f) \not= F(D,r,f)$ for some region~$r$ and field~$f$, then both entries are set to $F(C,r,f) \cup F(D,r,f)$ by \ls{checkClassTable}. Because there are only finitely many regions and abstract effects, $\alg$ will reach the fixed point after some iterations. The resulting table $(F,\Ma)$ encodes the terminating effects and the call expressions.

The \typeff{} procedure maps an environment $\Gamma : \Var \rightharpoonup \Reg$ and a term $e \in \Expr$ to formal expressions $\cT \in \cM_*\langle\Reg\rangle$ and $\cS \in \cM_*\langle\Sig\rangle$, if $e$ is typable in $\Gamma$, such that $\Gamma \vdashb e : \cT \calls~\cS$ is derivable. It can be viewed as a function representation of the typing rules listed in Figure~\ref{fig:buechi:typing}. For example, when $e$ is an if-expression $\ttifthenelse{x}{y}{e_1}{e_2}$, the procedure looks at the regions of $x$ and $y$; if they are disjoint, then it returns \typeff$(\Gamma,e_2)$; otherwise, it returns the joint of \typeff$(\Gamma,e_1)$ and \typeff$(\Gamma,e_2)$. This captures the idea of the typing rules \textsc{if} and \textsc{else}. Note that \typeff{} implicitly takes the class table $(F,\Ma)$ as input. For instance, when $e$ is a call expression $x^C.m(\bar{y})$, \typeff{} looks up its effects and types in the method typing $\Ma$. Moreover, \typeff{} may update the field typing~$F$ when~$e$ is a field-update expression $x^C.f := y$. The \textsc{set} rule requires that the region of $y$ must be in those allowed for the field $f$ by $F$. If this condition is not satisfied, then \typeff{} updates $F$ by joining the offending entry with the region of $y$. Note that the \textsc{weak} rule does not have to be applied, but it can be used to trade precision of analysis for efficiency of the inference algorithm. See the discussion in Section~\ref{sec:conclusion}. 

\begin{theorem}
Let $P$ be a program that is well-typed w.r.t.\ some standard FJ class table. Then the above algorithm gives a B\"uchi class table $(F,\Ma)$ with respect to which $P$ is well-typed.
\end{theorem}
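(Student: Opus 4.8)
The plan is to prove two things: that the algorithm $\alg$ terminates, and that the table $(F,\Ma)$ it returns satisfies the well-typedness condition for $P$, namely that for every $(C,r,m,\bar s)\in\Sig$ with $C\in\Cls(r)$ and $(C,m)\in\dom(\mtable)$ the judgment $\this\of r,\,\bar x\of\bar s\vdashb e:\cT\calls\cS$ is derivable, where $(\bar x,e)=\mtable(C,m)$ and $(\cT,\cS)=\Ma(C,r,m,\bar s)$.

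\textbf{Termination.} First I would observe that every update in the main loop is monotone increasing: entries of $\Ma$ are only ever replaced by $\Ma(C,r,m,\bar s)\sqcup(\cT,\cS)$, and entries of $F$ are only ever enlarged, both in the \texttt{set} case of \typeff and in \ls{checkClassTable}. Since $P$ is fixed, the region set $\Reg$ and hence $\Sig$ are finite, $\Regs$ is finite, and the Büchi algebra $(\cM_*,\cM_\lo)$ is finite by assumption, so $F$ and $\Ma$ range over a finite poset. A non-decreasing sequence in a finite poset stabilizes, so after finitely many passes of the \texttt{do}-loop the guard $(F',M')=(F,\Ma)$ holds and $\alg$ returns.

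\textbf{Soundness of \typeff.} The second ingredient is a lemma stating that, with respect to the current class table $(F,\Ma)$, a call $\typeff(\Gamma,e)$ returning $(\cT,\cS)$ yields a derivation of $\Gamma\vdashb e:\cT\calls\cS$, where in the case $e\equiv x^C.f:=y$ the derivation is taken with respect to $F$ \emph{after} the update performed by that call. This is proved by structural induction on $e$, matching each branch of \typeff to the homonymous rule of Figure~\ref{fig:buechi:typing}: the \texttt{if}-branch to \textsc{if}/\textsc{else} according to $\disjoint{\Gamma(x)}{\Gamma(y)}$, the \texttt{let}-branch to \textsc{let} (using the induction hypothesis on $e_1$ and on each $e_2$ under $\Gamma[x\mapsto r_i]$), the call-branch to \textsc{call}, and so on. The only point needing care is \texttt{set}: \typeff first inserts $\Gamma(y)$ into $F(C,\Gamma(x),f)$, precisely so that the premise $s\in F(C,r,f)$ of rule \textsc{set} is satisfied. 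Totality of \typeff on the inputs arising in $\alg$ follows from the hypothesis that $P$ is well-typed in standard FJ.

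\textbf{Correctness at the fixed point.} Finally I would combine these. When $\alg$ terminates, the last pass changed nothing: for each processed signature $(C,r,m,\bar s)$ the assignment $\Ma(C,r,m,\bar s):=\Ma(C,r,m,\bar s)\sqcup(\cT,\cS)$ left the entry unchanged, whence $(\cT,\cS)\sqsubseteq\Ma(C,r,m,\bar s)$; likewise no \typeff-call modified $F$, so every \textsc{set} side condition holds already with the returned $F$. By the \typeff-soundness lemma applied with respect to this fixed-point table we obtain $\this\of r,\,\bar x\of\bar s\vdashb e:\cT\calls\cS$, and then the subsumption rule \textsc{sub} with $(\cT,\cS)\sqsubseteq\Ma(C,r,m,\bar s)$ yields the required judgment with $\Ma(C,r,m,\bar s)$ on the right-hand side. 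Well-formedness of $(F,\Ma)$ in the sense of Definition~\ref{def:class:table} is maintained throughout by \ls{checkClassTable}.

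\textbf{Main obstacle.} The subtle point is that \typeff is run against a table that is itself still changing and that it mutates $F$, so a derivation produced mid-run is not automatically a derivation against the final table. I expect the crux to be isolating the fixed-point invariant cleanly: because all updates are monotone and the loop guard certifies that the entire final pass is a no-op, the $(\cT,\cS)$ computed in that pass is already computed against the fixed-point table, and stability of the join is exactly what licenses the single application of \textsc{sub}. Monotonicity of the rules in $F$ and $\Ma$—enlarging the tables only makes more regions admissible in \textsc{get}/\textsc{set} and more effects available in \textsc{call}—is what guarantees that growth never invalidates the inference.
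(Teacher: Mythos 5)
Your proof is correct and follows essentially the argument the paper intends: the paper itself omits a formal proof of this theorem, but its surrounding discussion supplies exactly your three ingredients --- termination because the tables grow monotonically over a finite poset (``there are only finitely many regions and abstract effects''), the observation that \typeff{} is ``a function representation of the typing rules'' (your \typeff-soundness lemma by structural induction, including the $F$-update in the \textsc{set} case), and stabilization of the loop guaranteeing that the final pass runs against the fixed-point table so that a single use of \textsc{sub} closes the gap between the computed $(\cT,\cS)$ and the stored entry. Your identification of the main subtlety --- that mid-run derivations are against a mutating table and that the no-op final pass is what resolves this --- is exactly the point a full write-up would need to make explicit.
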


The typing $\Ma$ encodes the terminating effects of the program. We employ the algorithm in the proof of Theorem~\ref{thm:buechi:infinitary} to compute also a typing $\eta_\fA$ of nonterminating effects.

\section{Experimental Evaluation}
\label{sec:implementation}

We have a prototype implementation\footnote{The implementation is available at \url{https://github.com/cj-xu/GuideForceJava}.} of the type inference algorithm (Section~\ref{sec:type:inference}) using the Soot framework~\cite{Soot}. In the implementation, we extend the approach explained in this paper to support exception handling (see Appendix~\ref{sec:exception} for the details of the extension). But other language features such as reflection and concurrency are not covered yet. Our tool takes a Java bytecode program and a programming guideline as inputs. The guideline is represented as an automaton. Using Hofmann and Chen's construction~\cite{HC:abstract}, a B\"uchi abstraction for representing effects is generated from the automaton. The guideline includes a configuration specifying the default effects of intrinsic methods. For instance, the \texttt{verifyAuthorization} method in the \texttt{serve} example would be specified to have an effect representing $authcheck$. From the standard Java typing provided by the Java bytecode, our tool infers the effect information of the program and then verifies if it is accepted by the guideline automaton. If the inferred effect is not acceptable, the tool tries to report a counterexample, \ie, searching for an execution path of the program that violates the guideline. Such an execution path can help programers to identify the bug(s) of the program.

We tested and evaluated our implementation with the Securibench Micro benchmark\footnote{The website of Securibench Micro: \url{http://too4words.github.io/securibench-micro/}} as well as a number of additional examples. The benchmark provides 122 test cases in 12 categories. Each benchmark program implements a small self-contained servlet. We ran our tool to verify if they adhere to the guideline that only untainted commands can be executed. In addition to the examples of the paper, we implemented 19 programs with typical nonterminating behaviors and 10 programs with exception handling for testing the infinitary and exceptional effect analysis. For these programs, we tested if our tool computes the correct effects. We ran the experiment on a machine with a 3,5 GHz Dual-Core Intel Core i7 processor. The analysis in each test takes around 200ms on average, and the slowest one takes almost a second.

The results of the experiment are given in Table~\ref{table:experiment}, where the results of the 12 categories of Securibench Micro are listed first, followed by the results of our three groups of test cases. A cell $n/m$ of the ``Pass'' column means that there are totally $m$ test cases and $n$ of them run as expected. When a test fails, a reference to the reason is given in the ``Comments'' column. Here are the reasons:
\begin{enumerate}
\item \label{item:array} We assume that if a tainted element is added into an array then the array is tainted (\ie, all its elements are tainted), which can cause false positives.
\item \label{item:path} Our analysis is not fully path-sensitive.
\item \label{item:field:update} Our field updates are conservatively treated as weak updates. For example, if a field $x.f$ has a region type $\{T\}$ indicating that it is tainted, and later it is updated to a untainted value, then the type of $x.f$ becomes $\{T,U\}$ where $U$ is the region for untainted values. This sometimes leads to false positives in taint analysis.
\item \label{item:clinit} Static initialization is not supported.
\item \label{item:reflection} Reflection is not supported.
\item \label{item:syn} Concurrent features are not supported.
\end{enumerate}
The ``CE Report'' column contains the results of searching counterexamples. A cell $i/j$ of it expresses that there are $j$ programs violating the guideline, and for $i$ of them an execution path that leads to the violation is found. 

\begin{table}[h]
\centering
\begin{tabular}{|c|c|c|c|} 
\hline
\textbf{Test Category} & \textbf{Pass} & \textbf{CE Report} & \textbf{Comments} \\ 
\hline\hline
aliasing & 6/6 & 5/5 &  \\ 
\hline
arrays & 9/10 & 9/9 & (\ref{item:array}) \\
\hline
basic & 42/42 & 42/42 &  \\
\hline
collections & 14/14 & 13/13 &  \\
\hline
data structures & 6/6 & 4/5 &  \\
\hline
factories & 3/3 & 3/3 &  \\
\hline
inter & 12/14 & 12/12 & (\ref{item:clinit}) \\
\hline
pred & 6/9 & 5/5 & (\ref{item:path}) \\
\hline
reflection & 0/4 & 0/0 & (\ref{item:reflection}) \\
\hline
sanitizers & 6/6 & 3/3 &  \\
\hline
session & 3/3 & 3/3 &  \\
\hline
strong updates & 3/5 & 1/1 & (\ref{item:field:update}), (\ref{item:syn}) \\
\hline\hline
paper examples & 8/8 & 1/1 &  \\
\hline
infinitary & 19/19 & 0/0 &  \\
\hline
exceptions & 10/10 & 0/0 &  \\
\hline\hline
\textbf{total} & 145/157 & 101/102 & \\
\hline
\end{tabular}

\medskip
\caption{Results of the experiment}
\label{table:experiment}
\end{table}

\section{Previous and Related Work}
\label{sec:related:work}

We work along the line of typing Java programs with regions and effects~\cite{GHL:enforcing,BGH:pointer,EHZ:enforcing}. Instead of refining the standard Featherweight Java type system as in the previous work, we introduce a flow type system where class information from the standard type system can be omitted. Our approach to region typing is in the spirit of Microsoft's TypeScript~\cite{typescript} and Facebook's Flow~\cite{flow}. Moving to a flow type system has the benefit of simplifying the type system and making the choice of regions more flexible. Our system is similar to the session type and effect system for PCF studied in~\cite{OY:effects:sessions}. But ours can capture also effects of non-terminating programs.

The idea of a type system with B\"uchi effects for Java has been sketched in~\cite{HC:abstract,HC:buechi}. But its meta-theoretic properties like soundness and the correctness of type inference were not investigated. Here we fully develop this idea into to a new flow type system, and establish the meta-theory. Furthermore, we make the type system independent from the automata-theoretic constructions of~\cite{HC:abstract,HC:buechi} by capturing the necessary structure abstractly in terms of a B\"uchi abstraction. B\"uchi automata have been employed also in model checking and testing for infinite executions of programs~\cite{holzmann:spin,ADALID201461,ZT:test:buechi}. Our approach is entirely based on type systems and abstract interpretation.

Regions serve as the basis of various techniques for \eg\ memory management~\cite{10.1145/1029873.1029884,10.1006/inco.1996.2613}, pointer analysis~\cite{BGH:pointer,10.1007/978-3-319-10936-7_20}, and race detection~\cite{10.1007/978-3-642-03237-0_13}. They are closely interrelated with effects. Thus the use of regions makes our type-based approach of analysis context-sensitive. In higher-order model checking~\cite{10.1109/LICS.2009.29,10.1145/2487241.2487246,10.1109/LICS.2015.71,suzuki_et_al:LIPIcs:2017:7732}, intersection types are employed instead to achieve the context-sensitivity. 

Our approach is closely related to the one of Nanjo \emph{et al.}~\cite{10.1145/3209108.3209204}. They work with a type system where effects are represented by formulas of some fixpoint logic. Specifically, their formulas may contain least and greatest fixed points to specify the finite and infinite behavior of the program, which is highly similar to what our call expressions do (see Section~\ref{sec:formal:expression}). Their effects are value-dependent, because the formulas can contain program values. This makes their analysis path-sensitive. In our setting, effects can depend on regions which are a notion of abstracted value. In the computation of infinitary effects (Section~\ref{sec:infinitary}), we use an equation system with method signatures as variables. Notice that method signatures come with a region for each argument. Therefore, we have a restricted form of dependent temporal effect. To capture more precise dependencies, it would be possible to refine the notion of region, perhaps even to refinement types~\cite{10.1145/113445.113468}.

Skalka \emph{et al.}~\cite{10.1007/s10990-008-9032-6,10.1017/S0956796807006466} also work with effect type systems to verify trace-based safety properties. But they apply model-checking tools to trace effects inferred by the type systems for the verification. In their setting, trace effects are represented as label transition systems (LTSs). It seems that the equation systems that we use for approximating infinite traces can be considered as a form of LTS, if one orients the equations, that plays the same role as the LTSs in~\cite{10.1007/s10990-008-9032-6,10.1017/S0956796807006466}. However, in our case the variables in the equation system (\ie, method signatures) depend on regions, which improves the precision of analysis. The effect type system of~\cite{10.1007/s10990-008-9032-6} is an refinement of the FJ type system. Ours is simpler because it is separate from the FJ type system. 

\section{Conclusion and Discussion}
\label{sec:conclusion}

This paper presents a framework to verify if a given Featherweight Java program adheres to a given programming guideline. Let $\fA$ be a language representing the guideline. Suppose a B\"uchi abstraction that is faithful w.r.t.\ $\fA$ is given, \eg\ the one of Hofmann and Chen~\cite{HC:buechi}. From the standard Java types of the fields and methods, we construct a B\"uchi class table $(F,\Ma)$ with respect to which the program is well-typed. The table $\Ma$ and its infinitary effect typing $\eta_\fA$ provide the terminating and non-terminating effects to each method signature. If the effects of the methods are below the abstraction of $\fA$, then any event trace generated by the program is allowed by $\fA$ and thus the program adheres to the guideline.


Comparing to the previous work~\cite{EHZ:enforcing,HC:abstract}, our approach enhances the precision of analysis by capturing the effect for each region where a value may locate. The cost of this improvement is that the type inference is less efficient.
For instance, to compute the effect of the expression $\letin{x}{e_1}{e_2}$, we need to infer $e_2$ as many times as the number of possible regions where the value of $e_1$ may locate, while in the systems of~\cite{EHZ:enforcing,HC:abstract} $e_2$ is inferred only once.
However, our framework is actually very flexible. If we apply the \textsc{weak} rule in every step of the typing derivation, then we essentially work with an effect type system without regions, because in the end only the region \Unknown{} remains in the result and all effects of a value are merged.
If we extend regions with a join operation $\Union{r_1}{r_2}$ and a corresponding lattice structure, and change the \textsc{weak} rule to allow merging $r_1 \andrel U_1 \mid r_2 \andrel U_2$ to $\Union{r_1}{r_2} \mid (U_1 \cup U_2)$, then we can recover the systems of~\cite{EHZ:enforcing,HC:abstract}.
In addition, we can reduce the size of the method table in an implementation by letting $M(C, \Union{r_1}{r_2}, m, \bar s) = M(C, r_1, m, \bar s) \sqcup  M(C, r_2, m, \bar s)$ and storing only the latter two entries.
Moreover, one may come up with certain strategy about when to apply the \textsc{weak} rule in the inference algorithm to balance efficiency and precision. It is our future work to investigate such directions.

\begin{acks}
We thank the anonymous reviewers for their valuable comments and helpful suggestions. We are also grateful to the Mathematically Structured Programming Group at the University of Strathclyde for fruitful discussions on this work in the MSP 101 seminar. This work is supported by the \grantsponsor{dfg}{German Research Foundation}{https://www.dfg.de/} (DFG) under research grant~\grantnum{dfg}{250888164} (GuideForce).
\end{acks}

\bibliography{GFbib}

\appendix


\section{Operational Semantics Rules}
\label{sec:operational}
We define an auxiliary function $\classOf_h(v)$ to determine the type of value $v$ in heap $h$ as follows:
\[
\classOf_h(v) :=
\begin{cases}
\NullType & \text{if } v=\itnull \\
C & \text{if } h(v)=(C,\_,\_)\in\Obj
\end{cases}
\]
The operational semantics $(s,h) \vdash e \Downarrow v,h' \andrel w$ of terminating evaluation is defined inductively from the following inference rules:
\[
\inferrule*
{ }
{(s,h) \vdash \tto(a) \Downarrow \itnull,h \andrel a}
\]
\[
\inferrule*
{ }
{(s,h) \vdash x \Downarrow s(x),h \andrel \varepsilon}
\qquad
\inferrule*
{ }
{(s,h) \vdash \ttnull \Downarrow \itnull, h \andrel \varepsilon}
\]
\[
\inferrule*
{s(x)=s(y)\\
 (s,h) \vdash e_1 \Downarrow v,h' \andrel w}
{(s,h) \vdash \ttifthenelse{x}{y}{e_1}{e_2} \Downarrow v,h' \andrel w}
\]
\[
\inferrule*
{s(x)\not=s(y)\\
 (s,h) \vdash e_2 \Downarrow v,h' \andrel w}
{(s,h) \vdash \ttifthenelse{x}{y}{e_1}{e_2} \Downarrow v,h' \andrel w}
\]
\[
\inferrule*
{(s,h) \vdash e_1 \Downarrow v_1 , h_1 \andrel w_1\\
 (s[x\mapsto v_1],h_1) \vdash e_2 \Downarrow v_2 , h_2 \andrel w_2}
{(s,h) \vdash \letin{x}{e_1}{e_2} \Downarrow v_2 , h_2 \andrel w_1 w_2}
\]
\[
\inferrule*
{l \not\in \dom(h)\\
 G=[f \mapsto \itnull]_{f\in\fields(C)}}
{(s,h) \vdash \new{\ell}{C} \Downarrow l, h[l\mapsto (C,G,\ell)] \andrel \varepsilon}
\]
\[
\inferrule*
{(s,h) \vdash e \Downarrow v,h' \andrel w\\
\classOf_{h'}(v) \preceq C}
{(s,h) \vdash \cast{C}{e} \Downarrow v, h' \andrel w}
\]
\[
\inferrule*
{s(x)=l\\
 h(l)=(\_,G,\_)}
{(s,h) \vdash x.f \Downarrow G(f), h \andrel \varepsilon}
\]
\[
\inferrule*
{s(x)=l\\
 h(l)=(D,G,\ell)\\
 h'=h[l \mapsto (D, G[f \mapsto s(y)], \ell)]}
{(s,h) \vdash x.f := y \Downarrow s(y), h' \andrel \varepsilon}
\]
\[
\inferrule*
{s(x)=l\qquad
 h(l) = (D,\_,\_)\qquad
 \mtable(D,m) = (\bar{z},e)\\
 ([\this \mapsto l]\cup[z_i \mapsto s(y_i)]_{i \in \{1,\ldots,|\bar{z}|\}},h) \vdash e \Downarrow v,h' \andrel w}
{(s,h) \vdash x.m(\bar{y}) \Downarrow v, h' \andrel w}
\]
The operational semantics of divergence $(s,h) \vdash e \Uparrow \andrel w$ is defined coinductively from the following inference rules, where double horizontal lines are used:
\[
\mprset{fraction={===}}
\inferrule*
{s(x)=s(y)\\
 (s,h) \vdash e_1 \Uparrow \andrel w}
{(s,h) \vdash \ttifthenelse{x}{y}{e_1}{e_2} \Uparrow \andrel w}
\]
\[
\mprset{fraction={===}}
\inferrule*
{s(x)\not=s(y)\\
 (s,h) \vdash e_2 \Uparrow \andrel w}
{(s,h) \vdash \ttifthenelse{x}{y}{e_1}{e_2} \Uparrow \andrel w}
\]
\[
\mprset{fraction={===}}
\inferrule*
{(s,h) \vdash e_1 \Downarrow v_1 , h_1 \andrel w_1\\
 (s[x\mapsto v_1],h_1) \vdash e_2 \Uparrow \andrel w_2}
{(s,h) \vdash \letin{x}{e_1}{e_2} \Uparrow \andrel w_1 w_2}
\]
\[
\mprset{fraction={===}}
\inferrule*
{(s,h) \vdash e_1 \Uparrow \andrel w}
{(s,h) \vdash \letin{x}{e_1}{e_2} \Uparrow \andrel w}
\]
\[
\mprset{fraction={===}}
\inferrule*
{(s,h) \vdash e \Uparrow \andrel w}
{(s,h) \vdash \cast{C}{e} \Uparrow \andrel w}
\]
\[
\mprset{fraction={===}}
\inferrule*
{s(x)=l \qquad
 h(l) = (D,\_,\_) \qquad
 \mtable(D,m) = (\bar{z},e)\\
 ([\this \mapsto l]\cup[z_i \mapsto s(y_i)]_{i \in \{1,\ldots,|\bar{z}|\}},h) \vdash e \Uparrow \andrel w}
{(s,h) \vdash x.m(\bar{y}) \Uparrow \andrel w}
\]

\section{Exception Handling}
\label{sec:exception}

For exception handling, we extend the syntax of FJ (Section~\ref{sec:FJ}) with expressions
\[
\throw e \quad \text{and} \quad \trycatch{e_1}{C\, x}{e_2}.
\]
The operational semantics (Section~\ref{sec:operational}) is also extended with a relation $(s,h)\vdash e \downharpoonright l,h' \andrel w$ to model exception throwing. It expresses that, in the state $(s,h)$, the evaluation of the expression $e$ throws an exception that is stored in the location $l$, with the heap updated to $h'$, generating the event trace $w \in \Sigma^*$. The operational semantics of terminating evaluation $\Downarrow$ and the one of exception throwing $\downharpoonright$ are defined mutually. In particular, we have the following rules for the $\mathtt{throw}$ and $\mathtt{try}$-$\mathtt{catch}$ expressions:
\[
\inferrule*
{(s,h) \vdash e \Downarrow l,h' \andrel w}
{(s,h) \vdash \throw e \downharpoonright l, h' \andrel w}
\qquad
\inferrule*
{(s,h) \vdash e \downharpoonright l,h' \andrel w}
{(s,h) \vdash \throw e \downharpoonright l, h' \andrel w}
\]
\[
\inferrule*
{(s,h) \vdash e_1 \Downarrow v,h' \andrel w}
{(s,h) \vdash \trycatch{e_1}{E\,x}{e_2} \Downarrow v, h' \andrel w}
\]
\[
\inferrule*
{(s,h) \vdash e_1 \downharpoonright l, h_1 \andrel w_1\\
 \classOf_{h_1}(l) \preceq E\\
 (s[x \mapsto l],h_1) \vdash e_2 \Downarrow v,h_2 \andrel w_2}
{(s,h) \vdash \trycatch{e_1}{E\,x}{e_2} \Downarrow v,h_2 \andrel w_1w_2}
\]
\[
\inferrule*
{(s,h) \vdash e_1 \downharpoonright l, h' \andrel w\\
 \classOf_{h'}(l) \not\preceq E}
{(s,h) \vdash \trycatch{e_1}{E\,x}{e_2} \downharpoonright l, h' \andrel w}
\]
\[
\inferrule*
{(s,h) \vdash e_1 \downharpoonright l_1, h_1 \andrel w_1\\
 \classOf_{h_1}(l_1) \preceq E\\
 (s[x \mapsto l_1],h_1) \vdash e_2 \downharpoonright l_2, h_2 \andrel w_2}
{(s,h) \vdash \trycatch{e_1}{E\,x}{e_2} \downharpoonright l_2, h_2 \andrel w_1w_2}
\]
\[
\mprset{fraction={===}}
\inferrule*
{(s,h) \vdash e \Uparrow \ \andrel w}
{(s,h) \vdash \throw{e} \Uparrow \ \andrel w}
\]
\[
\mprset{fraction={===}}
\inferrule*
{(s,h) \vdash e_1 \Uparrow \ \andrel w}
{(s,h) \vdash \trycatch{e_1}{E\,x}{e_2} \Uparrow \ \andrel w}
\]
\[
\mprset{fraction={===}}
\inferrule*
{(s,h) \vdash e_1 \downharpoonright l, h_1 \andrel w_1\\
 \classOf_{h_1}(l) \preceq E\\\\
 (s[x \mapsto l],h_1) \vdash e_2 \Uparrow \ \andrel w_2}
{(s,h) \vdash \trycatch{e_1}{E\,x}{e_2} \Uparrow \ \andrel w_1w_2}
\]
The other rules for \eg~$\mathtt{if}$- and $\mathtt{let}$-statements are standard.

Instead of a complete redevelopment, we explain only the main changes to the type system introduced in Section~\ref{sec:type:system:sets}.

Firstly, to model also exception handling, typing judgments have the following form:
\[
\Gamma \vdash e : T \throws H \calls S
\]
where the expression $T$ for terminating executions and the call expression $S$ have been explained in Section~\ref{sec:formal:expression}. The expression $H \in \cP(\Sigma^*)\langle\Reg\rangle$ encodes the possible exceptional behaviors of $e$. If $H = r_1 \andrel U_1 \mid \cdots \mid r_n \andrel U_n$, then it expresses that, whenever $e$ throws an exception, there is some $i \in \{ 1, \ldots, n \}$ such that the value of the exception is in region $r_i$ and the generated event trace is in $U_i$. For instance, if $e$ is the follow expression
\[
\begin{aligned}
& \mathtt{let}\ x = \ttifthenelseg{\mathit{cond}}{(\tto(a);\, \new{\ell_1}{C})}{(\new{\ell_2}{D})}\\
& \mathtt{in}\ \tto(b);\, \throw x
\end{aligned}
\]
then both $T$ and $S$ can be $\emptyset$ because $e$ doesn't terminate properly and has no method calls, and $H$ can be $\Createdat{\ell_1} \andrel \{ab\} \mid \Createdat{\ell_2} \andrel \{b\}$ meaning that either $e$ throws an exception that was created with label $\ell_1$ and generates trace $ab$, or it throws an exception that was created with label $\ell_2$ and generates trace $b$.

The method typing~$M$ encodes also the expressions for exceptional behaviors of the methods; thus, it assigns to each method signature a tuple $(T,H,S)$ of formal expressions. The typing rules in Figure~\ref{fig:typing} are adapted according to the possible exceptional behaviors of the expressions. The base cases always have the empty expression of exceptional behaviors such as in
\[
\inferrule*[left={new}]
{\ }
{\Gamma \vdash \new{\ell}{C} : \Createdat{\ell} \andrel \{\varepsilon\} \throws \emptyset \calls \emptyset}
\]
because no exceptions can be thrown. Here are the typing rules for the $\mathtt{throw}$ and $\mathtt{try}$-$\mathtt{catch}$ expressions:
\[
\inferrule*[left={throw}]
{\Gamma \vdash e : T \throws H \calls S}
{\Gamma \vdash \throw{e} : \emptyset \throws T \cup H \calls S}
\]
\[
\inferrule*[left={trycatch}]
{ \Gamma \vdash e_1 : T \throws H \calls S \\
 H = r_1 \andrel U_1 \mid \ldots \mid r_n \andrel U_n \\\\
 \text{for all } i \in \{ 1, \ldots, n \} \ (\\\\
 C \preceq E \text{ for some } C \in \Cls(r_i) \text{\ \ and} \\
 \Gamma,\, x\of r_i \vdash e_2 : T_i \throws H_i \calls S_i \\\\
 \text{or}\\\\
 C \not\preceq E \text{ for all } C \in \Cls(r_i) \text{\ \ and \ }
 T_i = H_i = S_i = \emptyset)\\\\
 T' = T \cup \textstyle\bigcup^n_{i=1} U_i \cdot T_i \\\\
 H' = H/E \cup \textstyle\bigcup^n_{i=1} U_i \cdot H_i \\\\
 S' = S \cup \textstyle\bigcup^n_{i=1} U_i \cdot S_i}
{\Gamma \vdash \trycatch{e_1}{E\,x}{e_2} : T' \throws H' \calls S'}
\]
Rule \textsc{trycatch} is similar to rule \textsc{let}, but $e_2$ is typed only with the regions $r_i$ of $e_1$ to which the exception class $E$ is related, \ie, $E$ or its subclass is allowed by $r_i$. Moreover, because all the exceptions of class $E$ or its subclass that $e_1$ may throw would be caught, we should remove the related exception effects in $H$ before joining them into the resulting one $H'$. For this, we define an operation $H/E$ to remove from $H$ those pairs $(r \andrel U)$ such that all the classes allowed by $r$ are below $E$, \ie, we define
\[
H/E = H/\{ (r \andrel U) \in H \mid C \preceq E \text{ for all } C \in \Cls(r) \}
\]
where $H$ is treated as a set of pairs $(r \andrel U)$ and thus $H/E$ is defined using set complement. For example, suppose $\Cls(r)=\{C,D\}$ and $\Cls(s)=\{C,E\}$ with $C \preceq E$ and $D \not\preceq E$, then $(r \andrel U \mid s \andrel V) / E = (r \andrel U)$ because objects in region $s$ must be of class $E$ or its subclass.

Lastly, the soundness statement (Theorem~\ref{thm:soundness:gen}) states that the typing also correctly captures the exceptional behaviors, \ie, if
\[
\Gamma \vdash e : T \throws H \calls S
\quad \text{and} \quad
(s,h) \vdash e \downharpoonright l,h' \andrel w
\]
then there is $(r \andrel U) \in H$ such that $(l,h') \vdash r$ and $w \in U$. 

\end{document}